\newtheorem{example}{Example}
\newcommand{\one}{\mathbf{1}}
\def\ngoods{m}
\def\nagents{N}
\renewcommand\paragraph{\@startsection{paragraph}{4}{\z@}%
  {-3.25ex \@plus -1ex \@minus -0.2ex}%
  {0.01pt}%
  {\normalfont\bfseries\nobreak}%
}
\def\one{\mathbf{1}}
\def\argmax{\mbox{argmax}}
\def\la{\lambda}
\def\al{\alpha}
\def\w{\omega}
\def\ta{\theta}
\def\one{\mathbf{1}}
\def\E{\mathbf{E}}
\def\Re{\mathbf{R}}
\newcommand{\df}[1]{\textit{#1}}
\theoremstyle{plain}
\newtheorem{theorem}{Theorem}%[section]
\newtheorem{lemma}[theorem]{Lemma}%[section]
\newtheorem{proposition}[theorem]{Proposition}
\newtheorem{corollary}[theorem]{Corollary}
\begin{document}
\title{\large{Empirical Welfare Economics}\thanks{This paper is dedicated to the memory of Kim Border. We are grateful to  audiences at the CUHK-HKU-HKUST Joint Theory Seminar, UC Berkeley, the University of Chicago, The 2023 NSF-NBER Conference on Mathematical Economics at Caltech (which was celebrated in honor of Kim Border), McGill University, the 2022 Social Choice and Welfare meetings, Stanford University, the Workshop on Applications of Revealed Preferences, and Roy Allen for detailed comments.  Three anonymous referees and an editor also provided very useful feedback.}}

\author{ \small{Christopher P. Chambers}  \\
    \small{Department of Economics} \\
    \small{Georgetown University} \\
    \small{Christopher.Chambers@georgetown.edu}
    \and
    \small{Federico Echenique} \\
    \small{Department of Economics} \\
    \small{UC Berkeley} \\
    \small{fede@econ.berkeley.edu}
}
\date{\small{ \today }}

\maketitle

\begin{abstract}
%We provide an empirical analogue to the equality of marginal rates of substitution condition for Pareto optimality, and address related questions. 
Welfare economics relies on access to agents' utility functions: we revisit classical
questions in welfare economics, assuming access to data on agents' past choices
instead of their utilities.  Our main result considers the existence of utilities
that render a given allocation Pareto optimal. We show that a candidate allocation is
efficient for some utilities consistent with the choice data if and only if it is
efficient for an incomplete relation derived from the revealed preference relations
and convexity. Similar ideas are used to make counterfactual choices for a single
consumer, policy comparisons by the Kaldor criterion, and offer bounds on the degree
of inefficiency in a Pareto suboptimal allocation.
\end{abstract}

\section{Introduction}\label{sec:introduction}

Consider a social planner facing a collection of agents in a neoclassical resource allocation problem. Pareto optimality is characterized by the equality of agents' marginal rates of substitution, but to use this characterization our planner needs access to agents' utility functions. Suppose instead that the planner has access to a dataset consisting of a finite set of demand observations for each individual. The planner wants to know which allocations can be Pareto efficient for the collection of agents,  given what she knows from the observed dataset.  As a minimal discipline, she asks that there are monotone and convex preferences that are consistent with the data, and for which a given allocation is Pareto efficient. 

Our main result provides a complete characterization of the allocations that can be
Pareto efficient for the observed dataset. Our characterization parallels the
definition of Pareto optimality, with an empirical domination relation standing in
for unobservable utility comparisons. So the characterization says that there should
be no dominating alternative allocation, where the notion of domination captures what
can be inferred about agents' utilities from the dataset. In particular, the dataset
defines a revealed preference relation.  The revealed preference is, in general,
incomplete; it does not compare all alternatives.  Given revealed preference, we can
speak of making further comparisons based on monotonicity, transitivity, and
convexity.  For example, if it is known that both $x$ and $y$ are revealed preferred
to $z$, then $\frac{1}{2}(x+y)$ should also be at least as good as $z$.  Further,
imposing monotonicity allows for additional comparisons:  if $x$ is revealed
preferred to $z$, and $w \geq x$, then $w$ should also be preferred to $z$. Each such
comparison can be further combined with transitivity in order to impose additional
comparisons.  All the inferences that we can make recursively, using indirect revealed preference, convexity, and monotonicity, define what we call a domination relation for each individual agent.   This domination relation is, in a sense, the ``smallest'' set of inferences we can make from the data by using rationality, convexity and monotonicity alone.

The domination relation is typically highly incomplete. Incompleteness results from the limitations in the information contained in the data, even when augmented by the consequences of assuming monotone and convex preferences. This is in contrast with the normative statements about incomplete preferences, as in the work of \citet{ok2002utility,dubra2004expected,eliaz2006indifference}.  Efficiency with respect to our relation is the same notion as is used in the matching literature, where the incomplete relation is typically the stochastic dominance relation on a set of lotteries induced by a linear order on the set of degenerate outcomes.  See e.g. \citet{bogomolnaia2001new,mclennan2002ordinal, abdulkadiroglu2003ordinal,manea2008constructive, carroll2010efficiency, bogomolnaia2012probabilistic, hashimoto2014two, aziz2015universal, dogan2016efficiency}.

%Our main result says, moreover, that the concavity of utility has no empirical content above the convexity of the underlying preferences, at least for the question of deciding whether an allocation could be Pareto optimal. Our main result says that an allocation is Pareto optimal for some monotone and (explicitly) quasiconcave utilities if and only if it is Pareto optimal for some monotone and concave utilities that are consistent with the data. 

The paper actually uses the domination relation, and related concepts, to address a host of related questions in welfare economics. We start from individual welfare comparisons, and ask for counterfactual  comparisons that may be inferred from individual-level consumption data. In particular, given data from one consumer, and two new bundles $x$ and $y$, we ask when one can infer that the utility of $x$ is greater than that of $y$, for all rationalizing concave utilities. The exercise follows \cite{varian1982nonparametric}, and is related to the literature on demand bounds; see e.g. \cite{blundell2007improving,blundell2008best,blundell2012sharp,allen2020satisficing,allen2020counterfactual}. Our answer depends on a notion of empirical domination that is closely related to the notion behind our result on Pareto domination. There is, again, an empirically defined partial order among consumption bundles that captures all the comparisons that may be inferred from the dataset and the hypotheses of monotonicity, transitivity and concavity.

Next, we turn to collective welfare comparisons. Aside from our main result on Pareto optimal allocations, which we have already described, we consider the Kaldor criterion: whether an economic policy decision can be defended on the grounds that those who benefit from the policy could compensate those who lose \citep{kaldor1939welfare,hicks1939foundations,graaff1967theoretical}. Again the idea of domination gives us an answer, and serves to rule out whether demand data validates a policy decision. %We also present a result on whether a possibly efficient allocation is compatible with Walrasian equilibrium.

One approach \label{editor:oneapproach} to the problem could start from discrete choice. Imagine a single agent choosing from a finite set of alternatives. By observing choices from a set of possible finite menus, one could construct an incomplete revealed preference relation. Its transitive closure would in principle be incomplete, and a set of possible ``completions'' (or extensions) are possible. Now, with more than one consumer, and two alternative allocations $x$ and $y$, we can decide if $y$ might Pareto dominate $x$ by checking if there are completions for each consumer so that $y$ is ranked above $x$. This will occur as long as no agent's transitive closure ranks their consumption in $x$ above the one in $y$. Notice that this gives a nice answer in the discrete case when we only have a single competing allocation, $y$. In testing for Pareto optimality of $x$, however, we need to account for \textit{all} possible competing allocations. 

Our approach deals with the (neoclassical) model of infinitely divisible consumption:
Not a finite set of alternatives, and not discrete choice. The problem is handled by
an appeal to the ideas behind the second welfare theorem. An allocation is Pareto
optimal if and only if there is a common supporting price vector for each agent's
consumption. So we study a linear formulation of the problem of whether there exists
utilities that are consistent with the observed data, and that render a candidate
allocation Pareto optimal. Our theorem results from an application of linear programming duality.  As a consequence, the question is computationally tractable, and our conditions can be checked in ways that are computationally efficient. 

We focus on testing whether a given allocation could be Pareto optimal for some profile of utilities that are consistent with the data. We think of this as a natural and practical question that would come up in discussions of public policy. Consider a policy proposal that would result in an allocation $\bar x$. Can we say that the allocation $\bar x$, and by implication the underlying policy proposal, are possibly efficient? If the conditions we have laid out are violated, then there are not utilities for which the policy results in an efficient outcome for the economy.  
    
A more general question takes as given multiple allocations, and wonders if there is a utility profile that is consistent with the data and for which all the allocations under consideration are efficient. This more general question is interesting, but somewhat harder to motivate because it is not obviously tied to a given policy proposal.

When considering multiple allocations, our approach falls short of a full
characterization of efficiency, but still provides a practical and linear test.  That
is, if any of the allocations in a set of multiple allocations violates our
single-allocation condition, we know that the set as a whole cannot be possibly
efficient.  While we cannot quantify how often it is that each member of a set passes
the test while the set fails as a whole, clearly this is a nontrivial possibility (we
demonstrate an example in Section~\ref{sec:QL}). In the special case of quasi-linear
utility, we do offer a full characterization. 

%Further, even though our result does not provide a full characterization of when multiple allocations can be efficient (and actually explains that such a characterization exists, though it is polynomial), in any case we have still provided a practical and linear test for such environments.  That is, if any of the allocations in a set of multiple allocations violates our condition, we know that the set as a whole cannot be possibly efficient.  While we cannot quantify how often it is that each member of a set passes the test while the set fails as a whole, clearly this is a nontrivial possibility, as we demonstrate by example.  Nevertheless we believe the single allocation test goes significantly further than what has come before.

\paragraph{Related Literature.}

The paper starts with a discussion of the individual welfare comparisons that may be inferred from a consumption dataset using revealed preference tools. Then the paper turns to collective choice. Our results on individual welfare extend the ideas of \cite{varian1982nonparametric}, who considered how two consumption bundles that are not observed in the data might be ranked by a utility that rationalizes the data. Varian provides an answer in terms of a system of linear inequalities. We show that the answers using his linear system is equivalent to checking a condition that is derived from the data.

Our results on collective choice fit into two strands of literature. First, 
the theory of efficiency in classical economic environments without completeness is studied in many works; a few of these include \citet{shafer1975equilibrium, gale1975equilibrium,gale1977role, fon1979classical, weymark1985remarks, rigotti2005uncertainty} \cite{bewley2002knightian}, and \cite{bewley1987knightian}. In our case, preference incompleteness arises because of limited data on agents' preferences, and gives rise to challenges that are not present in the previous literature.

Preference incompleteness goes away, and \label{ref:largedata} the results in our paper cease to be interesting, when agents' preferences can be recovered with high levels of precision from the observed data. The recovery question is, however, not straightforward; even when large consumption datasets are available. \cite{mas1977} discusses counterexamples, and conditions under which preferences may be recovered from a demand function, while \cite{mas1978revealed} shows that the canonical Afriat rationalization may (under a Lipschitz condition on the underlying demand behavior) be used to recover agents' preferences. See also \cite{chambersrecovering} and \cite{ugarte2022preference}. Among other conditions, these results require that the data sample a rich enough subset of the possible budgets. 

The second strand of literature concerns testing whether certain allocations can be equilibria of a given economy.  \citet{brown1996testable} are the first to formulate the problem as a revealed preference exercise.  In that paper, the authors check whether a collection of candidate objects could be equilibria of a given economy.  Results in the revealed preference literature usually focus on establishing a list of polynomial inequalities that must be satisfied in order for the data to be rationalizable---these inequalities are analogous to the ``Afriat inequalities'' of rational consumer behavior.  In showing that a particular rationalization problem reduces to one of verifying whether a solution exists to a list of polynomial inequalities establishes that these problems are decidable, in an algorithmic sense.  See also \citet{brown2000uniqueness,bossert2002core,kubler2003observable,carvajal2004equilibrium,carvajal2004testable,bachmann2004rationalizing,bachmann2006testable1,bachmann2006testable2,brown2007nonparametric,brown2008refutable,carvajal2010testable,cherchye2011testable,carvajal2018testing} for testable implications of related environments.  Some of these investigate efficiency directly:  \citet{bossert2002core} discuss  how the core correspondence varies (for fixed preferences) as endowments vary. Their results characterize the testable implications of the core, but is restricted to the case of two agents and a fixed aggregate endowment; their ``data'' is generated by varying the distribution of a fixed aggregate endowment.\label{ref:core}  \citet{bachmann2006testable1} considers an environment in which collections of endowments and consumption bundles (but not prices) are observed.  His Proposition 5 establishes that Pareto efficiency has essentially no testable content in this environment, even if all preferences are represented by strictly concave and continuously differentiable utilities.\footnote{The idea is that a common linear preference renders every allocation efficient.  Then perturb each agent's utility a bit to ensure strict concavity and smoothness.} 

\citet{allen2019revealed,allen2020satisficing,allen2020counterfactual} also consider notions of welfare or of group decision making.

As mentioned, when it comes to welfare comparisons, what these papers primarily do is provide an analogue of the result of \citet{afriat1967construction}, whereby rationalizability is equivalent to the satisfaction of a set of inequalities.  In contrast, our work differs in two respects:  first, we provide an economic characterization of whether a given bundle could possibly be efficient---our characterization is more analogous to the characterization of rationality via absence of cycles (also discussed by \citet{afriat1967construction}, and termed ``Generalized Axiom of Revealed Preference'' by \cite{varian1982nonparametric}).  We take as the starting point of our proof a collection of ``Afriat inequalities'' that must be satisfied, and use these to uncover a dual system of linear inequalities that we can interpret --- they have concrete economic meaning --- and deliver a condition in terms of the domination relation.  

%We focus on a single, candidate allocation.  In so doing, we are able to come up with a formulation of the problem in which the equations we must solve are \emph{linear}.  This formulation is what allows us to leverage well-known duality techniques.  Were we to ask the same question for multiple candidate allocations, the problem would be polynomial.  Importantly, there may be two candidate allocations, each of which are possibly efficient, but which cannot possibly both be efficient at the same time.

%We are not the first to study representative consumers in a revealed preference framework. \cite{cherchye2009opening} consider household preference aggregation in a model with a collective public good, and \cite{cherchye2016gorman} establish an empirical counterpart to the Gorman aggregation result. Their focus is on empirically understanding two sources of aggregation: household bargaining and linear Engel curves. Our result focuses instead on endogenous income distribution, as in \cite{samuelson1956social}, but not necessarily with the presence of a social welfare function.\footnote{In general an endogenous income distribution which ensures rational aggregate behavior need not arise from maximization of a social welfare function.  See \citet{dow1988consistency}.} So, the income distribution is allowed to depend on the aggregate budget but not necessarily with a goal toward optimizing some type of social welfare.  Our result establishes the inherent weakness of not restricting the income distribution. 

\section{The model}\label{sec:themodel}

\paragraph{Basic definitions and notational conventions.}
We use the following notational conventions:  For vectors $x,y\in \Re^n$,  $x\leq y$
means that $x_i\leq y_i$ for all $i=1,\dots, n$;  $x < y$ means that $x\leq y$ and
$x\neq y$; and  $x\ll y$ means that $x_i < y_i$ for all $i=1,\dots, n$. The set of
non-negative vectors in $\Re^n$ is denoted $\Re^n_{+}$, and the set of vectors that
are strictly positive in all components is $\Re^n_{++}$. When $n$ is a non-negative
integer, we write the set $\{1,\ldots,n\}$ as $[n]$; with $[0]$ denoting the empty set.

A function $f:A\subseteq\Re^n\to\Re$ is \df{weakly monotone increasing}, or \df{non-decreasing}, if $f(x)\leq f(y)$ when $x\leq y$; and \df{monotone increasing}, if it is weakly monotone increasing and $f(x) < f(y)$ when $x\ll y$. We often just write ``increasing.''

%A function $u:\Re^n_+\to\Re$ is \df{concave} if, for all $x,y\in \Re^n_+$ and $\la\in(0,1)$,
%\[u(\la x+(1-\la)y)\geq \la u(x)+(1-\la)u(y);\] and
%\df{quasiconcave} if, for all $x,y\in \Re^n_+$ and $\la\in(0,1)$, \[ 
%u(\la x + (1-\la) y)\geq \min\{u(x),u(y) \}.
%\]

A function $u$ is \df{explicitly quasiconcave} if it is quasiconcave and, for all $x,y\in \Re^n_+$ and $\la\in(0,1)$,  $u(x)\neq u(y)$ implies that \[ 
u(\la x + (1-\la) y)> \min\{u(x),u(y) \}.
\]  Observe that explicit quasiconcavity of $u$ is a behavioral property, meaning a property of the preference relation represented by $u$; and that it is weaker than concavity. Indeed, explicit quasiconcavity is only a minor strengthening of quasiconcavity; it is weaker than strict quasiconcavity ($u(\la x + (1-\la) y)> \min\{u(x),u(y) \}$ for all $\la\in (0,1)$), which corresponds to strict convexity of preferences. Strict quasiconcavity rules out that indifference curves contain any flat regions (i.e contain any line segments), but flat regions are allowed by explicit quasiconcavity (some rather pathological examples with flat regions are ruled out). Perhaps explicit quasiconcavity is best known because it ensures that local maxima are global maxima, for which quasiconcavity alone does not suffice (see Theorem~192 in \cite{BorderOPtim}).

\paragraph{Definitions from welfare economics.} 
An agent is defined through a preference relation on $\Re^m_+$, which we represent throughout by a utility function $u:\Re^m_+\to\Re$.\footnote{We restrict attention to continuous preference relations, but given that preferences are only constrained to rationalize a finite dataset, continuity is without loss of generality.} The elements of $\Re^m_+$ are called \df{consumption bundles}. Given a finite set  of agents $N$, an \df{allocation} is a vector $\bar x=(\bar x_i)_{i\in N}\in\Re^{mN}_+$.\footnote{One should think of an allocation $\bar x$ as ``allocating'' the aggregate bundle $\sum_{i\in N} \bar x_i$ among the agents in $N$.} If each agent is endowed with a utility function $u_i$, an allocation $\bar y$ \df{Pareto dominates} the allocation $\bar x$ if $u_i(\bar y_i)\geq u_i(\bar x_i)$ for all $i$, with a strict inequality for at least one agent. An allocation $\bar x$ is \df{Pareto optimal} if there is no allocation satisfying \[\sum_{i\in N} \bar y_i = \sum_{i\in N}\bar x_i\] that Pareto dominates it.

Next we turn to a criterion for comparing allocations based on the principle that winners may compensate the losers. The idea is that those who gain in moving from one allocation to the other may compensate those who lose with the move in allocations.  Let $\bar x$ and $\bar y$ be two allocations. Say that $\bar x$ \df{weakly Kaldor dominates} $\bar y$ if there is no allocation $\bar z$ with $\sum_i \bar z_i\leq \sum_i \bar y_i$ that Pareto dominates $\bar x$. The idea is that if $\bar x$ does not weakly dominate $\bar y$, then there is a way of re-assigning (whence losers are compensated by winners) the aggregate bundle $\sum_i \bar y_i$ in a way that Pareto dominates $\bar x$ (see Chapter 5 in \cite{graaff1967theoretical} for a discussion of the Kaldor criterion).

%An \df{exchange economy} is a tuple $E=(u_i,\w_i)_{i\in N}$, where $i\in N$ is the set of agents in the economy, and each agent is endowed with a utility function $u_i$ and an \df{endowment vector} $\w_i\in\Re^m_+$. A \df{Walrasian equilibrium} in $E$ is a pair $((x_i)_{i\in N},p)$ for which 1) $\sum_i x_i=\sum_i \w_i$ (markets clear); and 2) for all $i\in N$, $p\cdot x_i= p\cdot \w_i$ and $u_i(x'_i)>u_i(x_i)$ implies that $p\cdot y> p\cdot \w_i$.

%Given endowment vectors $\w_i\in \Re^m_+$ for a set of agents $N$, we say that $\bar x=(\bar x_i)_{i\in N}\in\Re^{mN}_+$ is an \df{allocation of $(\w_i)_{i\in N}$} when $\sum_i \bar x= \sum_i\w_i$.

\paragraph{Data and rationalizability.} 
A pair $(p,x)\in\Re_+^{m + m}$ is an \df{observation}, and should be interpreted as
the datum that the consumption bundle $x\in\Re_+^m$ was chosen from the budget set
$\{y\in\Re^m_+ : p\cdot y\leq I \}$ in which the income, or budget, is $I=p\cdot
x$. A (possibly empty) finite list of observations $(p^k,x^k)_{k\in [K]}$ is termed an \emph{individual dataset}.  $N$ is a finite set of individuals. A \emph{group dataset} is a collection of individual datasets, one for each $i\in N$.  So, $D_i = \{(p_i^k,x_i^k)\}_{k\in [K_i]}$ denotes an individual dataset for individual $i$, and $\{D_i :i\in N\}$ is a group data set.

An individual dataset is \emph{rationalizable} if there is an increasing utility function $u_i:\Re^m_+\rightarrow \Re$ for which for all $k$, $u_i(x)>u_i(x_i^k)$ implies $p_i^k\cdot x > p_i^k \cdot x_i^k$.  In this case, we say that $u_i$ \df{rationalizes} the individual dataset (or that it is a \df{rationalizing} utility, when the dataset is implied).  Similarly, we say that a group dataset is \df{rationalizable} if each individual dataset is rationalizable.

In our paper, we insist that rationalizing utilities be monotone increasing. Clearly, some structure must be assumed on utilities, or any data becomes rationalizable by a constant utility. The most common approach is to impose local non-satiation, and then resort to Afriat's theorem which says that one may without loss of generality assume a rationalizing utility that is both increasing and concave. Thus monotonicity, but more importantly concavity, comes for free in the case of an individual agent's observed consumption behavior. 

Revealed preferences involve the use of two binary relations. The \emph{direct revealed preference} of agent $i$ is denoted by $\succeq^R_i$, and defined by $x \succeq^R_i y$ if $x\geq x_i^k$ for some $k$ that satisfies $p_i^k \cdot x_i^k \geq p_i^k \cdot y$, or if $x=y$. The \emph{direct strict revealed preference} of agent $i$ is denoted by $\succ^R_i$, and defined by  $x \succ^R_i y$ if  
\[x\gg x' \succeq^R_i y, \text{ or } x\geq x^k_i \text{ and } p^k_i \cdot x^k_i > p^k_i\cdot y,
\] for some $x'$ or observation $k$.  These definitions of revealed preferences are slightly unusual, in that they already incorporate the expectation of a monotone preference, and symmetry is built-in.\footnote{See \cite{chambers2009supermodularity} and 
\cite{nishimura2017comprehensive} for such ``compositions'' of the revealed preference relation with the partial order on consumption bundles. It is easy to see that Afriat's theorem remains true under our definition of revealed preference.} Observe that ${\succ^R_i} \subseteq {\succeq^R_i}$.

The \emph{indirect revealed preference} $\succeq^I_i$ is defined as the transitive closure of $\succeq^R_i$.  The \emph{indirect revealed strict preference} $x \succ^I_i y$ obtains when there is a finite chain $x=z_1 \succeq^R_i \ldots \succeq^R_i z_L=y$, where at least one instance of $\succeq^R_i$ is $\succ^R_i$.

An individual dataset $D_i$ satisfies the \df{Generalized Axiom of Revealed Preference (GARP)} if there is no $x,y\in\Re^m_+$ such that $x\succeq^I_i y$ while $y\succ^I_i x$.

\section{Results}\label{sec:results}

We consider counterfactual welfare comparisons. Given data on individual consumption, we seek to characterize which counterfactual (i.e.\ unobserved) welfare conclusions may be drawn on the basis of what can be inferred about agents' preferences from the data.  For individual agents, we want to evaluate unobserved bundles. For a group of agents, the welfare comparisons are about the possible Pareto optimality of some allocation, or consistency with the Kaldor criterion. 

All proofs are relegated to Section~\ref{sec:proofs}.

\subsection{Individual welfare} \label{sec:individualwelfare}

We begin by discussing individual welfare conclusions that may be drawn from a single agent's consumption dataset. Aside from the intrinsic merit of these results, they serve to introduce some of the ideas we use later in our (main) results on collective welfare.

Our first result asks when we can say that one bundle is unambiguously better than another, given what the data tell us about the agent. Specifically, given an individual dataset $\{(x^k,p^k):1\leq k\leq K \}$, and two bundles $\bar x$ and $\bar y$, when is $\bar x$ ranked above $\bar y$ for all increasing and concave utility functions compatible with the data? 

The answer turns out to depend on a binary relation that may be inferred from the consumer's choices. \cite{varian1982nonparametric} also considers this question and offers an answer in the form of a linear program; what Varian calls Fact 4. Our binary relation essentially emerges from the dual program to Fact 4. Say that $\bar x$ \df{bests} $\bar y$ if $\bar x$ is a convex combination of some collection $z^l$ of bundles, $1\leq l \leq L$, such that, for each $l$,  $z^l\succeq^I \bar y$. A bundle $\bar x$ \df{strictly bests} $\bar y$ for agent $i$ if it weakly bests it and, moreover, if in the defining convex combination there is $l$ with $z^l\succ^I \bar y$.\footnote{A bundle $\bar x$ strictly bests itself when it is incompatible as a choice with the existing dataset. This means that there is no price $\bar p$ at which $\bar x$ could be demanded, and for which the resulting dataset (obtained by adding $(\bar x,\bar p)$  to the  dataset) is rationalizable. If the dataset is rationalizable, however, we may choose $\bar p$ that supports the upper contour set of a (without loss, concave) rationalizing utility at $\bar x$. Adding the resulting observation to the dataset preserves its rationalizability.}

It is easy to see that if $\bar x$ strictly bests $\bar y$, then it is ranked above $\bar y$ by any rationalizing concave and monotone increasing utility function $u$. Indeed, if $\bar x =\sum_l \la_l z^l$ is as above, then:
\begin{align*}
    u(\bar x) & \geq \sum_{l=1}^L \la_l u(z^l) \\
    & >  \sum_{l=1}^L \la_l u(\bar y) = u(\bar y).
\end{align*} 
The first inequality follows from concavity, and the second from $u$ rationalizing the data and the requirements on $z^l$ in the definition of besting. Our first result says that strict besting is not only sufficient for the counterfactual comparison of two bundles, but also necessary.

\begin{theorem}\label{thm:varianwelfare}
Let  $(x^k,p^k)_{1\leq k\leq K}$ be an individual  dataset and $\bar x, \bar y\in\Re^m_+$ be two bundles.  Then $u(\bar x)> u(\bar y)$ for all concave and monotone increasing $u$ that rationalize the dataset if and only if $\bar x$ strictly bests $\bar y$.
\end{theorem}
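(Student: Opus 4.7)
The ``if'' direction is given in the excerpt. For the converse, I plan to argue the contrapositive: if $\bar x$ does not strictly best $\bar y$, then a concave, monotone, rationalizing $u$ with $u(\bar x)\leq u(\bar y)$ exists. The strategy is to cast the existence of such a $u$ as the feasibility of a finite linear system, apply a theorem of alternatives, and interpret the dual as a ``strictly bests'' certificate.

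The linear system comes from an augmented Afriat's theorem: such a $u$ exists if and only if there are numbers $u^k\in\Re$, multipliers $\lambda^k>0$ for $k=1,\ldots,K$, prospective values $u_{\bar x},u_{\bar y}\in\Re$, and supergradients $g_{\bar x},g_{\bar y}\in\Re^m_+$ satisfying the Afriat inequalities across every pair of points in the augmented dataset $\{x^1,\ldots,x^K,\bar x,\bar y\}$, together with the target $u_{\bar x}\leq u_{\bar y}$. The witness utility is the concave, monotone, piecewise-linear envelope obtained by taking the pointwise minimum of the affine pieces $u^k+\lambda^k p^k\cdot(z-x^k)$ for $k=1,\ldots,K$, together with $u_{\bar x}+g_{\bar x}\cdot(z-\bar x)$ and $u_{\bar y}+g_{\bar y}\cdot(z-\bar y)$; strict monotonicity in $\ll$ either follows from strictly positive prices or is recovered by a small linear perturbation.

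Motzkin's transposition theorem applied to this system yields a dichotomy: either feasibility (giving the desired $u$), or non-negative dual multipliers on the constraints whose weighted sum has zero coefficient on each free variable ($u^k,u_{\bar x},u_{\bar y}$), non-positive coefficient on each sign-restricted variable ($\lambda^k$ and each component of $g_{\bar x},g_{\bar y}$), and produces a strict contradiction (made non-trivial by a normalization such as $\sum_k\lambda^k\geq 1$ that rules out the degenerate constant utility).

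The hard part will be reading these dual multipliers off as a convex decomposition witnessing that $\bar x$ strictly bests $\bar y$. The zero-coefficient condition on each $u^k$ organizes the multipliers as a flow along the graph of Afriat edges, which (using transitivity of $\succeq^I$) collapses to a single convex combination $\bar x=\sum_l\alpha_l z^l$ whose support consists of bundles $\succeq^I$-linked to $\bar x$ or to $\bar y$. Non-negativity of the coefficient on each $\lambda^k$ forces $p^k\cdot x^k\geq p^k\cdot x^j$ along every edge used, which is exactly the direct revealed preference $x^k\succeq^R x^j$; strict positivity somewhere (forced by the dual contradiction) upgrades one link to $\succ^R$, hence to $\succ^I$ in the chain. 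Non-negativity of the components of $g_{\bar x},g_{\bar y}$ absorbs the monotonicity composition $w\geq x$ already built into the paper's definition of $\succeq^R$. Strict positivity of the multiplier on the target $u_{\bar x}\leq u_{\bar y}$ (which is forced because this inequality is the only lever that can drive the target contradiction) ensures at least one $z^l$ in the combination is $\succeq^I$-linked to $\bar y$ rather than only to $\bar x$. Reassembled, the dual certificate is exactly the decomposition of $\bar x$ required by ``strictly bests $\bar y$''. Contrapositively, ``does not strictly best'' delivers feasibility, producing the desired utility and completing the proof.
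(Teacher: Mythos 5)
Your high-level plan (augmented Afriat system, Motzkin's transposition theorem, read the dual as a revealed-preference certificate) is the machinery the paper deploys for its \emph{other} theorems, but for this statement the paper takes a much shorter route: it invokes Varian's Fact 4, which reduces the existence of a rationalizing concave monotone $u$ with $u(\bar x)\leq u(\bar y)$ to the existence of a single supporting price $q\in\Re^m_+$, $q>0$, satisfying $q\cdot \bar x\leq q\cdot x^k$ for every $k$ with $x^k\succeq^I\bar x$ or $x^k\succeq^I\bar y$ (strict when the comparison is $\succ^I$). The rows of that small system are already indexed by the indirect revealed preference comparisons, so the dual multipliers $\ta^k,\eta^k$ are \emph{immediately} the weights of the ``bests'' decomposition; no graph or cycle decomposition is needed. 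Your route dualizes a far larger system and must then recover the revealed-preference structure from raw multipliers.

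Beyond being longer, your sketch has concrete gaps at exactly the points you defer. First, if you impose Afriat inequalities on \emph{every} pair, the dual condition in the $\lambda^k$ column only constrains a \emph{sum} of terms $p^k\cdot(x^l-x^k)$, not each term, so a positive multiplier on an edge does not certify $x^k\succeq^R x^l$; you must restrict the primal to pairs already related by $\succeq^R$ (as the Lemma opening the paper's proof section does) and separately verify that the restricted system still characterizes rationalizability. Second, the claim that positivity of the multiplier on $u_{\bar x}\leq u_{\bar y}$ ``ensures at least one $z^l$ in the combination is $\succeq^I$-linked to $\bar y$'' is the crux and is misplaced: that row has zero coefficient in the $g_{\bar x}$ columns, which is where your convex decomposition of $\bar x$ lives, so the $\bar y$-linkage cannot be read off that multiplier. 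It has to be traced through the cycle decomposition --- a cycle leaving $\bar x$ toward some $x^l$ and returning through the target edge or through the node $\bar y$ yields $x^l\succeq^I\bar y$, while one returning directly to $\bar x$ yields only $x^l\succeq^I\bar x$ --- and you have not ruled out that \emph{all} cycles are of the second kind. That degenerate case is precisely a certificate that $\bar x$ strictly bests \emph{itself}, which is why the theorem carries the hypothesis that $\bar x$ does not strictly best itself; your argument never uses that hypothesis, which is a reliable sign that the $\bar y$-occurrence has not actually been secured. Until the restricted system, the cycle analysis, and the role of the no-self-besting hypothesis are worked out, the converse direction is not established.
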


In Theorem~\ref{thm:varianwelfare}, we require that every concave, increasing, rationalizing utility satisfies a certain property. In our next result, Theorem~\ref{thm:individualwelfare}, we asks about the existence of a rationalizing utility with a certain property. The latter sort of result is, of course, most conclusive when the condition fails, and thus certifies that the property is incompatible with any rationalizing utility. Our main results in Section~\ref{sec:collectivewelfare} are of this nature. 

Finally, observe that Theorem~\ref{thm:individualwelfare} only asks utilities to be explicitly quasiconcave. The same will be true of our main results.

\begin{theorem}\label{thm:individualwelfare}
Let $(x^k,p^k)_{1\leq k\leq K}$ be an individual dataset and $\bar x\in\Re^m_+$ a bundle. There exists a monotone increasing and explicitly quasiconcave rationalizing utility $u$ for which $u(\bar x)\geq \max \{u(x^k):1\leq k\leq K \}$ if and only if, once we add $\bar x\succeq^R x^k$ for all $k$ to the revealed preference relation, as well as 
as well as $x^k \succeq^R \bar x$ when $p^k \cdot (\bar x - x^k) \leq 0$ and $x^k \succ^R \bar x$ when $p^k \cdot (\bar x - x^k) <0$, we have
%any revealed preference comparisons required by the sign of $p^k\cdot (x^k - \bar x)$, we have 
\begin{enumerate}
    \item GARP is satisfied.
    \item There is no bundle $y\leq \bar x$ that strictly bests $\bar x$. 
\end{enumerate}
\end{theorem}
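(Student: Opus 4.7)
The plan is to split into necessity and sufficiency, handled respectively by direct verification and by an LP-duality argument in the spirit of Afriat.

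\textbf{Necessity.} Let $u$ be increasing, explicitly quasiconcave, rationalize the data, and satisfy $u(\bar x)\geq u(x^k)$ for every $k$. I first verify that $u$ respects each relation added to the revealed preference: $\bar x\succeq^R x^k$ is automatic from the hypothesis; $x^k\succeq^R\bar x$ whenever $p^k\cdot(\bar x-x^k)\leq 0$ is the contrapositive of rationalization; and $x^k\succ^R\bar x$ whenever $p^k\cdot(\bar x-x^k)<0$ follows by picking $\bar x'\gg\bar x$ that is still affordable at $p^k$, yielding $u(x^k)\geq u(\bar x')>u(\bar x)$ via rationalization and strict monotonicity. Consistency passes to $\succeq^I$ and $\succ^I$, forbidding GARP violations and giving condition 1. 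For condition 2, suppose $y\leq\bar x$ strictly dominates $\bar x$ via $y=\sum_l\lambda_l z^l$ with $z^l\succeq^I\bar x$ for all $l$ and some $z^{l^*}\succ^I\bar x$. Then $u(z^l)\geq u(\bar x)$ everywhere and $u(z^{l^*})>u(\bar x)$. An induction on the number of terms, using explicit quasiconcavity to propagate the strict inequality past inputs that tie with $u(\bar x)$ (in the two-point base case, if one input equals $u(\bar x)$ and another strictly exceeds it, the strict form of quasiconcavity forces the convex combination strictly above $u(\bar x)$), yields $u(y)>u(\bar x)$, contradicting $u(y)\leq u(\bar x)$ from weak monotonicity.

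\textbf{Sufficiency.} Given conditions 1 and 2, since concavity is stronger than explicit quasiconcavity it suffices to build an increasing concave rationalizing utility with $u(\bar x)\geq\max_k u(x^k)$. Via Afriat's theorem and its extensions, the existence of such a $u$ is equivalent to the feasibility of the linear system $U^j-U^k\leq\lambda^k p^k\cdot(x^j-x^k)$ together with $U^j-U^k\leq\lambda^k p^k\cdot(\bar x-x^k)$ over $U^k\in\Re$ and $\lambda^k>0$: the concave piecewise-linear function $u(x)=\min_k\{U^k+\lambda^k p^k\cdot(x-x^k)\}$ then satisfies $u(x^k)=U^k$ and $u(\bar x)\geq\max_j U^j$. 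The plan is to invoke LP duality to prove feasibility from conditions 1 and 2; an infeasibility certificate would be a collection of non-negative multipliers on these inequalities whose weighted combination annihilates the variables $U^k,\lambda^k$ and leaves a positive residual.

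\textbf{Main obstacle.} Translating such a dual certificate back into the combinatorial structure of conditions 1 and 2 is the crux of the proof. A certificate using only the first family of inequalities is a cycle in the original $\succeq^I$-relation, hence a GARP violation in the augmented relation, contradicting condition 1. A certificate involving multipliers on the second family forces a convex combination $y=\sum_l\lambda_l z^l$ of observed bundles with each $z^l$ indirectly at least as good as $\bar x$ in the augmented relation --- with at least one indirectly strictly preferred --- while the balance equations of the certificate pin $y\leq\bar x$ componentwise. This is exactly a bundle $y\leq\bar x$ strictly dominating $\bar x$, contradicting condition 2. Tracking which multipliers contribute strictly (aligned with $\succ^I$ and with the clauses that arise from $p^k\cdot(\bar x-x^k)<0$) versus weakly (aligned with $\succeq^I$) is the delicate bookkeeping that makes the translation rigorous, and is where I expect the bulk of the technical effort to lie.
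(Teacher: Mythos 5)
Your necessity direction is fine (and more explicit than the paper, which omits it), but the sufficiency direction has a genuine gap: the linear system you dualize is the wrong one. You encode the requirement $u(\bar x)\geq\max_k u(x^k)$ by adding the constraints $U^j-U^k\leq\lambda^k p^k\cdot(\bar x-x^k)$ to the ordinary Afriat inequalities, i.e.\ you treat $\bar x$ only as a point at which the Afriat min-of-affine function is evaluated. What the paper does instead is treat $\bar x$ as a \emph{new observation} with an unknown utility level $\bar u$ and an unknown supporting price $q\in\Re^m_+$, $q\neq 0$, whose Afriat multiplier is normalized to $1$ so the system stays linear: it adds $u^k\leq\bar u+q\cdot(x^k-\bar x)$, $\bar u\leq u^k+\lambda^k p^k\cdot(\bar x-x^k)$ when $p^k\cdot(\bar x-x^k)\leq 0$, $u^k\leq\bar u$, and $q\geq 0$, $q\neq 0$. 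These extra unknowns are not optional bookkeeping; they are what make the dual interpretable. The $m$ columns corresponding to $q$ give $m$ balance equations in the dual, and it is exactly these that yield the vector inequality $\sum_k\theta^k(x^k-\bar x)\leq 0$, i.e.\ the dominating bundle $y=\theta^0\bar x+\sum_k\theta^k x^k\leq\bar x$ of condition 2. In your system every dual balance equation is scalar (one per $U^j$, one per $\lambda^k$), so there is no mechanism by which a certificate could ``pin $y\leq\bar x$ componentwise'' --- the step you defer as delicate bookkeeping is not provable from your primal.

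A second, related omission: condition 1 concerns GARP of the \emph{augmented} relation, whose potential violations are cycles passing through $\bar x$ via the new edges $\bar x\succeq^R x^k$. In the paper these edges appear as rows $\one_*-\one_k$ (encoding $u^k\leq\bar u$) attached to a node $*$ for $\bar x$ in the incidence-matrix block, and the Poincar\'e--Veblen--Alexander cycle decomposition of the dual then separates cleanly into the two cases of the theorem (a strict cycle through $*$ gives a GARP violation; otherwise the $q$-columns give a bundle $y\leq\bar x$ that strictly dominates $\bar x$). Your graph has no node for $\bar x$, so your case analysis (``only the first family'' versus ``some multiplier on the second family'') does not line up with the dichotomy in the statement. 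It is also not clear that your system is even feasible under conditions 1 and 2, since for observations with $p^k\cdot(\bar x-x^k)>0$ you impose constraints tying $U^j-U^k$ to $\lambda^k$ that the paper's system deliberately avoids. To repair the proof, replace your second family with the paper's augmented Afriat system (unknowns $\bar u$ and $q$, multiplier at $\bar x$ normalized to $1$) and run the duality argument on that.
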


\subsection{Collective welfare}\label{sec:collectivewelfare}

Our main result characterizes the allocations that are efficient for some utility functions (with the requisite properties) that are consistent with a group dataset. 

An allocation $\bar y$ \df{empirically dominates} the allocation $\bar x$ if $\sum_i \bar y_i \leq \sum_i \bar x_i$ while $\bar y_i$ bests $\bar x_i$ for all $i$, and strictly bests it for at least one~$i$. Observe the parallelism with the notion of Pareto domination: Given increasing utility functions $(u_i)_{i\in N}$ we may say that an allocation $\bar y$ Pareto dominates $\bar x$ if $\sum_i \bar y_i \leq \sum_i \bar x_i$, while $u_i(\bar y_i)\geq u_i(\bar x_i)$ for all $i$, and $u_i(\bar y_i)> u_i(\bar x_i)$ for at least one $i$. Theorem~\ref{thm:welfare} implies that empirical domination really is the empirical counterpart to Pareto domination. 

\begin{theorem}\label{thm:welfare} 
Let $(D_i)_{i\in N}$ be a rationalizable group dataset, and $\bar x$ an allocation. The following statements are equivalent:
\begin{enumerate}
    \item\label{it:wf1} There are monotone increasing, and explicitly quasiconcave, rationalizing utilities for which $\bar x$ is Pareto efficient.
    \item\label{it:wf2} There are monotone increasing, and concave, rationalizing utilities for which $\bar x$ is Pareto efficient.
    \item\label{it:wf3} The allocation $\bar x$ is not empirically dominated by any other allocation. 
\end{enumerate}
\end{theorem}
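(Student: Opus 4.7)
The plan is to establish the cycle (\ref{it:wf2})~$\Rightarrow$~(\ref{it:wf1})~$\Rightarrow$~(\ref{it:wf3})~$\Rightarrow$~(\ref{it:wf2}). The first implication is the easy observation that an increasing concave utility is automatically explicitly quasiconcave: if $u(x) > u(y)$ and $\lambda \in (0,1)$, concavity gives $u(\lambda x + (1-\lambda) y) \geq \lambda u(x) + (1-\lambda) u(y) > u(y)$, which is the required strict inequality.

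For (\ref{it:wf1})~$\Rightarrow$~(\ref{it:wf3}), I argue the contrapositive. Suppose $\bar y$ empirically dominates $\bar x$, so $\sum_i \bar y_i \leq \sum_i \bar x_i$ and each $\bar y_i = \sum_l \mu_l^i z_l^i$ with every $z_l^i \succeq_i^I \bar x_i$, strict at some $(i_0, l_0)$. Take any increasing, explicitly quasiconcave, rationalizing $u_i$. Rationalization together with monotonicity transfers $\succeq_i^R$ and $\succ_i^R$ into weak and strict utility comparisons (for the strict part, perturb the cheaper bundle strictly upward inside the budget and invoke strict monotonicity), and transitivity extends this to $\succeq_i^I$ and $\succ_i^I$. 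Quasiconcavity then yields $u_i(\bar y_i) \geq u_i(\bar x_i)$ for every $i$. To sharpen to strict at $i_0$, split the collection $\{z_l^{i_0}\}$ into indices with $u_{i_0}(z_l^{i_0}) = u_{i_0}(\bar x_{i_0})$ and those (nonempty, containing $l_0$) with strict inequality, express $\bar y_{i_0}$ as a two-way convex combination of the corresponding weighted averages, and apply explicit quasiconcavity. Finally, absorb any aggregate slack $\sum_i \bar x_i - \sum_i \bar y_i$ into one coordinate of $\bar y$ and use weak monotonicity to produce an allocation of $\sum_i \bar x_i$ that Pareto dominates $\bar x$, contradicting (\ref{it:wf1}).

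The heart of the theorem is (\ref{it:wf3})~$\Rightarrow$~(\ref{it:wf2}), which I would attack via linear-programming duality. For each agent $i$ introduce unknowns $v_i^k, \bar v_i \in \Re$ for the utility values at the observed bundles and at $\bar x_i$, Afriat multipliers $\lambda_i^k, \bar \lambda_i > 0$, and a shared price $\bar p \geq 0$. Impose the Afriat inequalities $v_i^k \leq v_i^l + \lambda_i^l\, p_i^l \cdot (x_i^k - x_i^l)$ on the augmented data $\{(x_i^k,p_i^k)\}_k \cup \{(\bar x_i, \bar p)\}$. A feasible solution yields the piecewise-linear, increasing, concave utilities $u_i(x) = \min_k (v_i^k + \lambda_i^k p_i^k \cdot (x - x_i^k))$ (augmented term included); these rationalize $D_i$ and admit $\bar p$ as a common supergradient at each $\bar x_i$, which is sufficient for Pareto efficiency of $\bar x$ via the standard first-welfare argument.

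The argument then applies a theorem of the alternative (Motzkin) to this linear system. Infeasibility would produce non-negative weights on the Afriat inequalities combining to a contradiction; the plan is to read these weights agent-by-agent as convex combinations $\bar y_i$ of bundles each $\succeq_i^I \bar x_i$ (each chain of active inequalities traces an indirect revealed preference path), with at least one strict comparison supplied by a strict Afriat inequality, while the constraint dual to the shared variable $\bar p$ aggregates the individual identities into $\sum_i \bar y_i \leq \sum_i \bar x_i$. This is exactly an empirical dominance, contradicting (\ref{it:wf3}). The main obstacle is precisely this dual bookkeeping: strict positivity of the multipliers must be enforced (requiring Motzkin- or lexicographic-style alternatives rather than bare Farkas), the common-price coupling across agents must aggregate to the exact inequality $\sum_i \bar y_i \leq \sum_i \bar x_i$ rather than to a looser sum, and the dual weights must be shown to pick out chains in $\succeq_i^I$ ending with at least one genuine $\succ_i^I$ comparison, so that the recovered allocation meets the exact definition of empirical dominance rather than some technical weakening.
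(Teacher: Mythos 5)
Your proposal follows essentially the same route as the paper: the easy implications are handled identically, and for (\ref{it:wf3})~$\Rightarrow$~(\ref{it:wf2}) the paper likewise augments the Afriat system with a common supporting price at $\bar x$, applies Motzkin's transposition theorem, and reads the dual solution as an empirically dominating allocation. Two points, however, separate your sketch from a complete proof. First, as written your primal system is not linear: you introduce $\bar\lambda_i>0$ as a free unknown alongside the shared price $\bar p$, so the augmented Afriat inequality $v_i^k\leq \bar v_i+\bar\lambda_i\,\bar p\cdot(x_i^k-\bar x_i)$ is bilinear and Motzkin does not apply. The paper's fix (emphasized in its Remarks section) is to normalize $\bar\lambda_i=1$, which is what keeps the system linear even though $\bar p$ is unknown; you need this normalization explicitly.

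Second, the ``dual bookkeeping'' you defer is the actual content of the hard direction, and it does not follow from inspecting individual rows: the row coupling $u_i^k$ to $\bar u_i$ through $q$ is present for \emph{every} $k$, so a positive dual weight on it does not by itself tell you that $x_i^k\succeq_i^I\bar x_i$. The paper gets this by observing that, restricted to the utility-value columns, the constraint matrix is the incidence matrix of a directed graph, so the dual weights form a nonnegative circulation; the Poincar\'e--Veblen--Alexander theorem decomposes it into cycles, each cycle through the edge $(i,k)\to(i,*)$ must return to $(i,k)$ along edges that each encode a direct revealed preference, and chaining these gives $x_i^k\succeq_i^I\bar x_i$ (rationalizability of the original data is used to discard cycles avoiding the $(i,*)$ node). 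The required strict comparison then comes from a case split on whether the strictly positive dual weight sits on the row enforcing $q\neq 0$ (yielding $\sum_i\bar y_i\ll\sum_i\bar x_i$, after which one perturbs a bundle upward to manufacture a strict domination) or on a row enforcing $\lambda_i^k>0$ (which forces a strictly negative edge into some cycle and hence a genuine $\succ_i^I$ comparison). Your outline correctly anticipates all of this, but without the circulation/cycle decomposition the step from dual weights to indirect revealed preference chains is a genuine gap.
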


The theorem provides a characterization of the allocations that could be efficient, for some monotone and convex preferences of the agents (with the minor strengthening of convexity implied by explicit quasiconcavity). The role of the unobserved utility functions in the definition of Pareto domination is taken by the observable empirical domination relations.\footnote{The proof of Theorem~\ref{thm:welfare} is based on an application the theorem of the alternative.  A different method of proof would be to construct the revealed preference relations of the candidate bundle $\overline{x}$ which is not empirically dominated, and then attempt to separate the implied Scitovsky set of this bundle from the set of bundles $y$ for which $\sum_i \overline{x}_i \gg y$, resulting in a supporting price $p$.  The idea would then be to show that adding, for each agent, the observation $(p,\overline{x}_i)$ results in a new dataset for each agent, where each new dataset satisfies GARP.  Since the price $p$ is common to all agents, there is a common marginal rate of substitution for any preference rationalization, and so we would have an efficient bundle.  Though this method is certainly more intuitive than what we have done, we were unable to show in general that these new datasets generally satisfy GARP without reverting to the theorem of the alternative.} 

%An important message in Theorem~\ref{thm:welfare} is that concavity comes for free. In consumer theory, explicit quasiconcavity is a behavioral property: a property of the consumer's preference relation. Concavity of utility is a cardinal property of the consumer's utility function. Similar to Afriat's theorem, Theorem~\ref{thm:welfare} essentially says that in a world of agents with convex preferences, concavity has no testable implications when it comes to detecting efficient allocations.

Empirical domination ensures the existence of a common supporting price at the allocation $\bar x$, essentially the equality of marginal rates of substitution for a collection of rationalizing utilities. If we additionally require that this price supports the \df{Scitovsky contour} at $\bar x$, then the ideas behind Theorem~\ref{thm:welfare} can be used to provide an empirical basis for the Kaldor criterion:\footnote{Given utilities $(u_i)$, the \df{Scitovsky contour} at $\bar x$ is the set $S(\bar x) = \{ 
\sum_i z_i : u_i(z_i)\geq u_i(\bar x_i) \text{ for all } i\in N\}$. If a price $q$ supports all individual upper contour sets at $\bar x$ and $q\cdot \sum_i \bar y_i < q\cdot \sum_i \bar x$, then $\sum_i\bar y_i\notin S(\bar x)$.}

\begin{corollary}\label{cor:kaldor}
Let $(x_i^k,p_i^k)_{1\leq k\leq K_i}$, for $i\in N$, be a rationalizable group dataset. Let $\bar x$ and $\bar y$ be allocations. There are increasing, concave, rationalizing utilities for which $\bar x$ weakly Kaldor dominates $\bar y$ if there is no allocation $(\bar z_i)$ that weakly dominates $\bar x_i$ for all $i$, and strictly dominates it for at least one $i$, and a scalar $\kappa\geq 0$, for which 
\[ 
\sum_i \bar z \leq \sum_i \bar x_i + \kappa (\sum_i \bar y_i - \sum_i \bar x_i)
\]
\end{corollary}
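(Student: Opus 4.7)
The plan is to obtain from the hypothesis, via Farkas duality, both the rationalizing utilities supplied by Theorem~\ref{thm:welfare} and a common nonnegative supporting price $q$ at $\bar x$ that additionally satisfies $q\cdot(\sum_i \bar y_i - \sum_i \bar x_i)\leq 0$. Once such a price is available, a short direct argument in the spirit of the footnote's Scitovsky-contour observation closes things off.

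Specializing the hypothesis to $\kappa=0$ already reproduces the non-empirical-domination condition of Theorem~\ref{thm:welfare}, which supplies increasing, concave, rationalizing utilities $(u_i)$ for which $\bar x$ is Pareto efficient together with a common nonnegative supporting price $q$ at $\bar x_i$ for each $i$ (this common support is exactly what the duality step inside the proof of Theorem~\ref{thm:welfare} delivers). The extra strength of the hypothesis is that empirical domination remains infeasible even when the aggregate resource is relaxed along the direction $\sum_i \bar y_i - \sum_i \bar x_i$. I would cast this as the infeasibility of a single linear system in $(\bar z_i,\kappa)$, namely the same linear system behind the $\kappa=0$ case augmented by the single nonnegative variable $\kappa$ with its one additional column in the aggregate row. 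A Farkas/Motzkin alternative then returns the same multipliers one gets for Theorem~\ref{thm:welfare}, namely $(u_i,q)$, plus one extra dual inequality coming from the $\kappa$-column: $q\cdot \sum_i \bar y_i - q\cdot \sum_i \bar x_i \leq 0$.

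To finish, suppose toward contradiction that some $\bar z$ with $\sum_i \bar z_i \leq \sum_i \bar y_i$ Pareto dominates $\bar x$ under the constructed $u_i$. Because $q$ supports each upper-contour set and the $u_i$ are monotone increasing (hence locally nonsatiated), weak preference at each $i$ gives $q\cdot \bar z_i \geq q\cdot \bar x_i$, and at the agent where the improvement is strict this sharpens to $q\cdot \bar z_i > q\cdot \bar x_i$. Summing yields $q\cdot \sum_i \bar z_i > q\cdot \sum_i \bar x_i$, while $q\geq 0$, $\sum_i \bar z_i \leq \sum_i \bar y_i$, and the extra dual inequality give $q\cdot \sum_i \bar z_i \leq q\cdot \sum_i \bar y_i \leq q\cdot \sum_i \bar x_i$, the desired contradiction. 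The main obstacle will be the middle paragraph: I need to look inside the proof of Theorem~\ref{thm:welfare} and identify the explicit linear program whose infeasibility encodes empirical non-domination, so that appending the single $\kappa$-column can be checked to produce precisely the extra dual inequality above. This is duality bookkeeping rather than new economic content, but it depends on the structure of that proof and not merely its statement.
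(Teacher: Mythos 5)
Your plan is correct and coincides with the argument the paper intends: the paper gives no written proof of this corollary, but its Scitovsky-contour footnote and the pointer to Theorem~\ref{thm:welfare} amount to exactly your construction --- append the single constraint $q\cdot(\sum_i\bar x_i-\sum_i\bar y_i)\geq 0$ to the Afriat-plus-supporting-price system in the proof of Theorem~\ref{thm:welfare}, read the corollary's $\kappa$ off that new row's dual multiplier (so that the aggregate inequality $\sum_i\sum_k\theta_i^k(x_i^k-\bar x_i)\leq 0$ becomes $\sum_i\bar z_i\leq \sum_i\bar x_i+\kappa(\sum_i\bar y_i-\sum_i\bar x_i)$), and close with the supporting-price contradiction. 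The one caution for your deferred middle step is orientation: it is cleaner to keep the Afriat system in $(u_i^k,\lambda_i^k,\bar u_i,q)$ as the primal and recover $(\bar z,\kappa)$ from the dual weights --- as the proof of Theorem~\ref{thm:welfare} does, via the cycle decomposition that handles the strict-domination requirement --- than to axiomatize the set of dominating pairs $(\bar z_i,\kappa)$ directly as a polyhedron, which would force you to carry the convex-combination weights and the strictness disjunction explicitly.
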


Observe that Corollary~\ref{cor:kaldor} only offers a sufficient condition for Kaldor domination. When the condition holds, then we may say that there are rationalizing utilities for which a switch from $\bar x$ to $\bar y$ could not be defended on the basis of the Kaldor criterion. 

Our results assume that consumers' datasets are rationalizable. Empirical studies often document violations of this property, but there is (arguably) evidence of many environments where such violations are relatively small. \label{page:RefCGARP} See, for example, \cite{echenique2011money} and the summary of the empirical literature discussed in \cite{chambers2016revealed}. Welfare comparisons for inconsistent agents is, in any case, a conceptually challenging question.

\section{Multiple allocations}\label{sec:QL}

The results obtained in Section~\ref{sec:results} exemplify the power of our approach, but there are also clear limits. Given a dataset, one may ask a related question for a \emph{collection} of allocations: whether there exists a single economy capable of generating all such allocations as Pareto efficient ones.  It is natural to conjecture that there is such an economy if and only if each of the allocations is undominated.  This conjecture turns out to be false, as shown by the following example: 

\begin{example}\label{ex:multiplealloc}Let $N=\{1,2\}$, and suppose there are two commodities, so that $m=2$.  Individual $1$ has an empty individual dataset.  Individual $2$ has four observations:  $(p_2^1,x_2^1) = ((2,1),(1,2))$, $(p_2^2,x_2^2)=((2,1),(0,4))$, $(p_2^3,x_2^3)=((1,2),(2,1))$, and $(p_2^4,x_2^4)=((1,2),(4,0))$.

Now, suppose we want to consider the allocations $\bar x_1^1 = (1,0)$, $\bar x_2^1 = (0,4)$, and $\bar x_1^2 = (0,1)$, $\bar x_2^2 = (4,0)$.  Observe that because individual $1$ has an empty individual dataset, each of these allocations are possibly efficient by Theorem~\ref{thm:welfare}.  On the other hand, they cannot both be efficient for the same economy.  To understand why, observe that if $q^1$ supports $x_2^1$, then $q^1 \cdot (0,4) \leq q^1 \cdot (1,2)$,  as the individual data set for individual $2$ is rational.  If $q^1(2) = 0$ (the second coordinate of $q^1$), then this inequality is obviously strict as $q^1 \geq 0$. 

So, if $q^1(2) = 0$, we conclude that $q^1 \cdot (1,2) - q^1 \cdot (0,4) > 0$, so that $q^1\cdot (1,-2)>0$, from which we conclude $q^1\cdot (1,-1)>0$, or $q^1 \cdot x_1^1 > q^1 \cdot x_1^2$.  Similarly, if $q^1(2) >0$, then we know $q^1 \cdot (1,-2) \geq 0$, so that (as $q^1(2) > 0$), $q^1 \cdot x_1^1 > q^1 \cdot x_1^2$.

So, $q^1 \cdot x_1^1 > q^1 \cdot x_1^2$; symmetrically, $q^2 \cdot x_1^2 > q^2 \cdot x_1^1$.  These inequalities obviously cannot simultaneously hold for a rational decision maker.
\end{example}

In the proof of Theorem~\ref{thm:welfare}, we reduced the problem of testing whether an allocation $\bar x$ could be efficient to the question of the existence of a supporting price $q$. Were we to ask that multiple allocations be efficient, we would need different supporting prices for each such allocation; but more to the point, the scale factors could differ across individuals, thus rendering the system nonlinear. In other words, we would need different $\lambda$ for the different allocations, and the normalization used in the proof of Theorem~\ref{thm:welfare} would no longer work. 

The problem goes away when the value of $\la$ is fixed, suggesting that we 
consider the special case of quasilinear utility: when $\la=1$. In this case, we can indeed
characterize the collections of allocations that may jointly be Pareto efficient for
some rationalizing quasilinear utility. We first revisit the rationalizability
question for a single consumer,  a question first analyzed by
\citet{brown2007nonparametric}, and then turn to the problem of a collection of
allocations.

\subsection{Individual data}
Consider an individual data set
$(x^k,p^k)_{1\leq k\leq K}$. We say that it is \df{quasi linear rationalizable} if there exists a utility function $U:\Re^\ngoods_+\to\Re$ so that, for all $k\in[K]$,
\[
U(x^k) - p^k\cdot x^k \geq U(x) - p^k\cdot x
\] for all $x\in\Re^n$.

A matrix $\eta\in\mathbb{R}_+^{K\times K}$ is \emph{bistochastic} if for every $t\in [K]$, $\sum_{s\in [K]} \eta(t,s)=\sum_{s\in [K]} \eta(s,t)=1$.

The following result is a form of the theorem in \citet{brown2007nonparametric},
essentially an economic analogue of the notion of \emph{cyclic monotonicity} due to
\citet{rockafellar1966characterization}.\footnote{See also \citet{browning1989nonparametric}.}  We state it without proof (its proof is implicit in the proof of the following theorem as well). 

\begin{proposition}\label{prop:QL}
An individual dataset $(x^k,p^k)_{1\leq k\leq K}$ is quasi linear rationalizable by a concave and monotonic utility if and only if, for any bistochastic matrix $\eta\in\mathbb{R}_+^{K\times K}$, we have $\sum_{k}\sum_t \eta(k,t) p^t\cdot (x^k-x^t) \geq 0$.%cyclic weighted graph $\eta$ on $[K]$, \[
%\sum_{k\in[K]} \sum_{t\in[K]} \eta(k,t) p^k\cdot (x^t - x^k)\geq 0
%  \]
  \end{proposition}

To interpret Proposition~\ref{prop:QL}, think of a bistochastic matrix as a probability distribution over pairs $(k,t)$, after a normalization. If $U$ is a rationalization of the data, then the sum $\frac{1}{\sum_{(k,t)}\eta(k,t)}\sum_{(k,t)}\eta(k,t) [U(x^k)-U(x^t)]$ is the expected change in utility when going from the consumption $x^t$ to $x^k$. If the matrix is bistochastic, this expected change is zero. On the other hand, since $U$ rationalizes the data, for each $k$ and $t$,
$U(x^t) - p^t\cdot x^t \geq U(x^k) - p^t\cdot x^k$. Thus then change in utility
$U(x^k)-U(x^t)$ is bounded above by $p^t\cdot (x^k-x^t)$. And therefore the expected value of $p^s\cdot (x^k-x^t)$, $\frac{1}{\sum_{(k,t)}\eta(k,t)}\sum_{(k,t)}\eta(k,t) p^t\cdot (x^k-x^t)$ must be non-negative.

%%%%%%%%%
\subsection{Multiple allocations}

Here, we will show that the quasi-linear model allows a natural linear test of the
hypothesis that multiple allocations could potentially be Pareto
efficient. Example~\ref{ex:multiplealloc} shows that, in general, an allocation-by-allocation approach does not capture all of the implications imposed by hypothesizing that multiple allocations are Pareto efficient.  In the general setting, there is no linear test that we could perform.  But in the quasi-linear setting, it becomes quite simple.

Let $x^s=(x^s_1,\ldots,x^s_{\nagents})\in\Re_+^{L\nagents}$ for $s\in[L]$ be a collection of $L$ allocations. In a notational abuse, we regard the elements of $[K_i]$ and $[L]$ as distinct, even if they are the same number.

In the following, a matrix $\eta$ is constant row-column sum if there is some number
$c$ such that, for every $k$ and $l$, $\sum_t \eta(t,k)=c=\sum_t \eta(l,t)$.  That is, if it is a scaled version of a bistochastic matrix.
 
\begin{theorem}\label{thm:QL}There exist concave utilities $U_i$ that quasi-linear rationalize the data, and for which the allocations $x^s$ are Pareto optimal if and only if there are no constant row-column sum matrices $\eta_i\in\mathbb{R}_+^{([K_i]\cup [L])\times ([K_i]\cup [L])}$ for which:
\[\sum_i \sum_{t\in K_i}\sum_{k\in [K_i]\cup [L]}\eta_i(k,t)p_i^t\cdot(x_i^t-x_i^k)>0\] and for all $s\in L$, \[\sum_i \sum_{k\in [K_i]\cup [L]}\eta_i(k,s)(x_i^s-x_i^k)\geq 0.\]\end{theorem}

The idea follows as in Proposition~\ref{prop:QL}.  Suppose the data are rationalizable, and suppose $\eta_i$ satisfies the conditions in the theorem, and that each $x^s$ could be efficient.  We will argue that a contradiction entails.

Hypothesizing that for each $s$, $x^s$ is an efficient allocation means that there
are prices $q^s$ at which agent $i$ demands $x_i^s$.  Let us for now set $p_i^s=q^s$
for each $i\in N$ and $s\in L$.  Then, owing to Theorem~\ref{thm:QL}, we must
have \begin{equation}\label{eq:one}\sum_i \sum_{t\in [K_i] \cup [L]}\sum_{k\in [K_i]
    \cup [L]}\eta_i(k,t)p_i^t \cdot (x_i^t-x_i^k)\leq 0.\end{equation}  Now, since
for any $s\in L$, $\sum_i \sum_{k\in [K_i]\cup [L]}\eta(k,s)(x_i^k-x_i^s) \leq 0$ and
since $p_i^s=q^s\geq 0$, using linearity we have \begin{equation}\label{eq:two}\sum_i
  \sum_{k\in [K_i]\cup [L]}\eta(k,s)p_i^s\cdot (x_i^s-x_i^k)\geq 0\end{equation} for
  any $s\in [L]$.  For each $s$, subtracting equation~\eqref{eq:two} from
  \eqref{eq:one}, we get that $\sum_i \sum_{t\in K_i}\sum_{k\in K_i\cup
    Ly}\eta(k,t)p_i^t\cdot (x_i^t-x_i^k) \leq 0$, contradicting the first equation in
  the statement of the theorem. We offer a formal proof in Section~\ref{sec:proofthmQL}.

%Observe also that we may assume that each $\eta_i$ is actually a bistochastic matrix, and by renormalizing, that there is some probability distribution $\alpha\in \Delta(N)$ such that the equations in the Theorem may be rewritten:
%$$\sum_i \alpha_i \sum_{t\in [K_i]}\sum_{k\in [K_i] \cup [L]}\eta_i(k,t)p_i^t \cdot (x_i^t-x_i^k) > 0$$ and $$\sum_i \alpha_i \sum_{k\in [K_i] \cup [L]}\eta_i(k,s)x_i^k \leq \sum_i \alpha_i x_i^s.$$

To interpret the conditions in Theorem~\ref{thm:QL}, we may assume that each $\eta_i$ is actually a bistochastic matrix, and, by renormalizing, that there is some probability distribution $\alpha\in \Delta(N)$ such that the equations in the Theorem may be rewritten as:
$$\sum_i  \sum_{t\in [K_i]}\sum_{k\in [K_i] \cup [L]}\al_i\eta_i(k,t)p_i^t \cdot
(x_i^k-x_i^t) < 0$$
and for each $s\in [L]$, 
\[\sum_i \sum_{k\in [K_i]\cup [L]} \al_i \eta_i(k,s)(x_i^k-x_i^s)\leq 0.\]
We shall see that the conditions in the theorem are, roughly speaking, multi-agent
analogues of the conditions for quasi-linear rationalizability in
Proposition~\ref{prop:QL}, once we hypothesize common supporting prices for the
allocations $x^s$, $s\in[L]$.

Interpret the product $\al_i\eta_i(k,t)$ as a probability: draw an agent at random according to
$\al$, and then a pair $(k,s)$ using the bistochastic matrix. Just like in
Proposition~\ref{prop:QL}, the change in utility from $x^t_i$ to 
$x^k_i$ is upper bounded by $p^t\cdot(x^k_i-x^t_i)$. In a bistochastic matrix, the
expected utility change must be zero, and therefore  Proposition~\ref{prop:QL}
results as the expectation of $p^t\cdot(x^k_i-x^t_i)$ cannot be negative.

Theorem~\ref{thm:QL} is, however, about  efficiency, which demands that we
find supporting prices $q^s$ for each allocation 
$s\in [L]$. We may set individual prices $p^s_i=q^s$, because efficiency requires
that the same prices support each individual agent's consumption (a generalization of
the equalization of marginal rates of substitution).  Now, since $q^s>0$, if 
 $\sum_i\sum_{k\in [K_i]\cup [L]}\al_i \eta_i(k,s)(x_i^k-x_i^s)< 0$ holds for each
$s\in [L]$, then we obtain 
\[\sum_{s\in [L]} q^s\cdot \sum_i \sum_{k\in [K_i]\cup [L]}\al_i\eta_i(k,s)(x_i^k-x_i^s)
= \sum_i \sum_{s\in [L]}\sum_{k\in [K_i]\cup [L]}\al_i\eta_i(k,s)p^s_i\cdot (x_i^k-x_i^s)
< 0.\]

Using the upper bound on utility changes that we used in Proposition~\ref{prop:QL}, 
this means that the expected change in utility is negative (when drawing an agent at
random, and a pair of allocations from $[L]$ and $[K_i]\cup [L]$). But the overall
expected change must be zero, so the inequality in the formula is inconsistent with
efficiency. 

\section{Inefficient allocations}\label{sec:inefficient}

We now turn to an empirical evaluation of potentially inefficiency allocations. In particular, we present some results using the measure of Pareto inefficiency proposed by \cite{debreu1951coefficient}: the \df{coefficient of resource utilization}.

In order to introduce the relevant concepts, consider an allocation $x=(x_i)_{i\in N}$ and fix a profile of utility functions $(u_i)_{i\in N}$ for the agents in $N$. Let $S^{u}(x_i)=\{z_i\in\Re^\ngoods_+:u(z_i)\geq u(x_i) \}$ denote the \df{upper contour set} for
utility $u$ at consumption vector $x_i$; and 
\[
S^{u_1,\ldots,u_{\nagents}}(x_1,\ldots,x_\nagents) = \sum_{i\in N}S^{u_i}(x_i)
\] the \df{Scitovsky contour} at $x$ for the profile of utility functions
$(u_1,\ldots,u_{\nagents})$. In words, the Scitovsky contour of an allocation $x$ is the
set of aggregate bundles that may be decomposed into an allocation that guarantees
each agent at least the utility that they obtain in $x$.

Debreu observes that, if the allocation $x$ is not Pareto optimal, then as an aggregate consumption bundle, $\sum_{i=1}^\nagents x_i$ will lie in the interior of the Scitovsky contour $S^{u_1,\ldots,u_{\nagents}}(x_1,\ldots,x_\nagents)$. Debreu proposes to measure the degree of inefficiency in $x$ by the distance between $\sum_{i=1}^\nagents x_i$ and the boundary of the Scitovsky contour: essentially his measure quantifies the degree to which agents' implied welfare in $x$ can be reached with fewer resources than the aggregate $\sum_i x_i$.

Debreu's definition involves a price-dependent notion of welfare, but he shows that it reduces to a \df{coefficient of resource utilization} $\rho$ defined by 
\[
\rho = \inf\{\rho'\in [0,1]: \rho' \sum_{i=1}^\nagents x_i\in  S^{u_1,\ldots,u_{\nagents}}(x_1,\ldots,x_\nagents)\}.
\] We refer to \cite{debreu1951coefficient} for further details on his result (which requires convexity, continuity and monotonicity on agents' preferences). 

Now it should be clear that calculating the coefficient of resource utilization requires access to agents' utility functions. In our case, we use data on agents' consumption choices to obtain bounds on the possible values of the coefficient. In particular, consider a group data set: $D_i = \{(p_i^k,x_i^k)\}_{k\in [K_i]}$. Suppose, just to simplify our notation, that $K=K_i$ for all $i\in N$. We focus on the $K$th allocation $x^K=(x_1^K,\ldots,x_\nagents^K)$, and want to measure its degree of inefficiency by means Debreu's coefficient. 

A canonical utility rationalization in revealed preference theory is Afriat's construction. For each individual data set $D_i$, we may let the set $A_i\in\Re^{2K}$ consist of all vectors $(V_i,\la_i)=((V^1_i,\la^1_i),\ldots,(V^K_i,\la^K_i))$ that 
solve the Afriat inequalities for $i$'s data $D_i$, and that satisfy $V^K_i=1$ and
$\min\{V^k_i-\la^k_ip^k_ix^k_i:1\leq k\leq K\}=0$. Now we may define the \df{Afriat rationalization} $u^{(V_i,\la_i)}:\Re^\ngoods_+\to \Re$ by
\[
u^{(V_i,\la_i)}(x_i) = \inf\{V^k_i + \la^k_i p^k_i\cdot (x_i-x^k_i) :1\leq k\leq K \}
\] for each $(V_i,\la_i)\in A_i$. One bound on the coefficient of resource utilization us obtained by 
\[ 
\underline \rho = 
\inf\{\rho'\in [0,1]: \rho' \sum_{i=1}^\nagents x^K_i\in 
\sum_{i=1}^\nagents
\bigcup\limits_{(V,\la)\in A_i} S^{u^{(V,\la)}}(x^K_i)\}.
\]

Another bound is found by means of the utility $u^*_i$.
\[
u^*_i(x) = \inf\{u^{(V,\la)}(x):(V,\la)\in A_i \},
\]  note that $u^*_i$ is a rationalization of the data $D_i$. We may use these utilities to define a bound
\[ 
\bar  \rho = 
\inf\{\rho'\in [0,1]: \rho' \sum_{i=1}^\nagents x_i\in S^{u^*_1,\ldots,u^*_\nagents}(x^K_1,\ldots,x^K_\nagents)\}.
\]

\begin{proposition}\label{prop:CRU}
Consider a group dataset $D_i = (p_i^k,x_i^k)_{k\in[K]}$, $i\in [\nagents]$ in which each individual dataset $D_i$ is rationalizable. The coefficient of resource utilization from any profile of Afriat rationalizations is bounded above by $\bar \rho$. The coefficient of resource utilization from any profile of concave and monotone rationalizations is bounded below by $\underline\rho$.
\end{proposition}

\section{Remarks}\label{sec:remarks}

The key to our results is an observation based on Afriat's theorem, which says that an individual dataset $\{(p^k_i,x^k_i):1\leq i\leq K_i \}$ is rationalizable if and only if there is a solution $U^k_i,\la^k_i>0$ to the following system of linear ``Afriat inequalities:''\footnote{See \cite{chambers2016revealed} for a discussion of Afriat's theorem and this system of linear inequalities.}
\[ 
U^l_i \leq U^k_i + \la^k_i p^k_i \cdot (x^l_i - x^k_i).
\]
The observation is that we may normalize such a solution so that $\la^{k^*}_i=1$ for some specific observation $k^*$. As a result we obtain a system that remains linear, even if the prices $p^{k^*}_i$ at this particular observation were unknown. 

With this observation in hand, we can now approach a problem like that in Theorem~\ref{thm:welfare}. For the allocation $\bar x$ to be Pareto optimal, agents' utilities would need to have a common supporting price $q$ at $\bar x_i$. The existence of such a price $q$ may be added to the above system of inequalities as if it were a new observation. Assuming that the corresponding value of $\la$ has been normalized to 1, the system is still linear. See \citet{bachmann2004rationalizing} or \citet{bachmann2006testable1} for related constructions.  Now the work in proving the theorem amounts to interpreting the dual linear system.

%Theorem~\ref{thm:welfare} is a statement about Pareto efficiency for the smallest incomplete extensions of the revealed preference relations consistent with convexity.  Condition 1 is ruled out by monotonicity, and condition 2 by Pareto efficiency.  Probably this is related to Rigotti, Shannon, Strzalecki, but it takes a different primitive and doesn't really use dubrovitskii-milyutin ideas.

We have discussed some obvious limits to our approach. Perhaps the main limitation is
that the rationalizing utilities may not be unique, leading to an indeterminacy when
the condition in our theorem is satisfied. But there are also additional applications
that we have not exhausted. One of these is envy-freeness. Suppose given a group
dataset, and consider the existence of rationalizing utilities that render some
proposed allocation $\bar x$ envy-free: meaning rationalizing utilities $(u_i)$ with
the property that $u_i(\bar x_i)\geq u_i(\bar x_j)$ for all $i,j\in N$. Our methods,
based on working through the dual of \textit{augmented} system of Afriat
inequalities, provide an answer to this question. 

A sketch of the solution follows: the trick is to add supporting prices for each agent at the proposed consumption of other agents in the allocation $\bar x$. The normalization idea keeps the system linear, and we just need to include utility values $u_{i,j}$ for $i$'s utility at the bundle intended for $j$:

\begin{enumerate}
\item For all $i\in N$ and all $k,l\in[K_i]$ for which $p_i^l \cdot (x_i^k -x_i^l)\leq 0$, we have $u_i^k\leq u_i^l + \lambda_i^l p_i^l \cdot (x_i^k-x_i^l)$.
\item For all $i,j\in N$ and all $k\in [K_i]$ for which $p_i^k \cdot (\overline{x}_j-x_i^k)\leq 0$, we have $u_{i,j} \leq u_i^k + \lambda_i^k p_i^k\cdot (\overline{x}_j - x_k^i)$.
\item For all $i,j\in N$ and all $k\in [K_i]$, $u_i^k \leq u_{i,j} + p_{i,j}\cdot (x_k^i-\overline{x}_j)$.
\item For all $i,j,h\in N$, $u_{i,j}\leq u_{i,h}+p_{i,h}\cdot (\overline{x}_j - \overline{x}_h)$.
\item For all $i,j\in N$, $u_{i,i}\geq u_{i,j}$.
\end{enumerate}

We omit the details, but hope that it is clear how to proceed on the basis of this system.

%Observe that, at least for the envy-free case, we can consider the inequalities for a given agent $i\in N$ independently of the inequalities for the remaining agents.  This makes this much easier.

%for envy-free and efficient also require the supporting prices for each agent at the allocation coincide.

\section{Proofs}\label{sec:proofs}

\subsection{Proof of Theorem~\ref{thm:varianwelfare}}

In proving Theorem~\ref{thm:varianwelfare}, we shall make use of an auxiliary ``besting'' definition: Say that $\bar x$ \df{bests'} $\bar y$ if $\bar x$ can be written as a convex combination of bundles $z^l$, where for each $l$ 
$z^l \succeq^I \bar x,$ or $z^l \succeq^I \bar y$, with at least one occurrence of the latter. Say that $\bar x$ \df{strictly bests'} $\bar y$ if it weakly bests it, and one of the revealed-preference comparisons is strict ($\succ^I$ for $\succeq^I$).%\footnote{A bundle $\bar x$ strictly bests itself when it is incompatible as a choice with the existing dataset. This means that there is no price $\bar p$ at which $\bar x$ could be demanded, and for which the resulting dataset (obtained by adding $(\bar x,\bar p)$  to the  dataset) is rationalizable. If the dataset is rationalizable, however, we may choose $\bar p$ that supports the upper contour set of a (without loss, concave) rationalizing utility at $\bar x$. Adding the resulting observation to the dataset preserves its rationalizability.}

\begin{lemma} \label{lem:besting}
If $\bar x$ bests' $\bar y$, then $\bar x$ bests $\bar y$. And if $\bar x$ strictly bests' $\bar y$, then $\bar x$ strictly bests $\bar y$. 
\end{lemma}
\begin{proof}
Suppose that $\bar{x}$ bests' $\bar{y}$ and by means of contradiction that $\bar{x}$ does not best $\bar{y}$.  

We can express $\bar{x} = \sum_k \mu^k w^k + \sum_l \lambda^l z^l$, where
$\mu^k,\la^l\geq 0$, $\sum_k\mu^k+\sum_l \la^l=1$,
each $w^k \succeq^I \bar{x}$ but not $w^k \succeq^I \bar{y}$, and each $z^l\succeq^I \bar{y}$.  By definition of bests', there is some $l$ for which $\lambda^l > 0$ and there must also be some $\mu^k>0$ since $\bar{x}$ does not best $\bar{y}$. So we may write 
$\bar{x} = \sum_{k=1}^{K_0} \mu^k w^k + \sum_{l=1}^{L_0} \lambda^l z^l$,
$\mu^k$ and $\la^l> 0$ for $k=1,\ldots,K_0$ and $l=1,\ldots,L_0$; and $\sum_{k=1}^{K_0} \mu^k + \sum_{l=1}^{L_0} \lambda^l =1$.

Consider first any $w^k$ for which the indirect preference is merely a consequence of $w^k\geq \bar x$. In other words, there is no observed data point $w'$ with $w^k\geq w'$ and $w'\succeq^I \bar x$. Without loss, suppose that this is $w^k=w^1$. Then note that $w^1> \bar x$ and thus $\mu^1<1$; so we may consider $\Delta=w^1-\bar x> 0$ and represent $\bar x$ as
\[
\bar{x} =\mu^1 [\bar x + \Delta]
+ \sum_{k=2}^{K_0} \mu^k w^k + \sum_{l=1}^{L_0} \lambda^l z^l
= \frac{1}{1-\mu^1}\left( \sum_{k=2}^{K_0} \mu^k w^k + \lambda^1 [z^1 +\frac{\mu^1\Delta}{\la^1}]
+ \sum_{l=2}^{L_0} \lambda^l z^l \right)
\]
Then $z^1\succeq^I \bar y$ and $\Delta\geq 0$ implies that  $z^1 +\frac{\mu^1\Delta}{\la^1}\succeq^I \bar y$, and we have reduced the number of $w^k\succeq^I \bar x$ by one. We may then assume that for each $w^k$ there exists some sequence $w',\ldots,w^*$ in the data so that $w^k\geq w'\succeq^R\ldots w^*\succeq^R \bar x$.

Consider now the set $\mathcal{W}$ consisting of the bundles that are 1) revealed indirectly preferred to $\bar{x}$, in the sense that there is an observed $w'$ with $w^k\geq w'\succeq^I \bar x$, and 2) not revealed indirectly preferred to $\bar y$. The bundles in $\mathcal{W}$ may not be in the support of $\bar x$, but $\mathcal{W}$ includes $w^1,\ldots,w^{K_0}$.

We claim that for any $w^k\in \mathcal{W}$ in the support of $\bar{x}$, there exists $w^{k'}\in \mathcal{W}$, also 
in the support of $\bar{x}$, for which $w^k \succ^I w^{k'}$ (in particular if $k$ is
unique then $w^k \succ^I w^k$). The claim provides a contradiction because it implies
the existence of a strict  $\succ^I$ cycle amongst the elements $w^k$, contradicting that the original data were rational.

To prove the claim, let $w^k\in \mathcal{W}$ in the support of $\bar{x}$ be arbitrary. Note that, if $w'$ is in the data, then $w'\geq z$ implies that $w'\succeq^R z$. So we may assume the existence of $w',\ldots,w^*$ with  $w^k \geq w'
\succeq^R \ldots \succeq^R w^* \succeq^R \bar{x}$, where all members of the chain are
members of $\mathcal{W}$ (as otherwise $w^k \succeq^I \bar{y}$, which we assumed
false by the definition of $w^k$). Note that the observed bundle $w^*$ is part of an observation $(p^*,w^*)$, so that $p^* \cdot w^* \geq p^* \cdot\bar{x}$.

Recall that there is at least one $z^l$, and that, for all $z^l$, $p^* \cdot w^* < p^* \cdot z_l$ (the latter as otherwise we would have $w^* \succeq^I z^l$, implying $w^* \succeq^I \bar{y}$ and hence $w^k \succeq^I \bar{y}$, again contradicting the definition of $w^k$).

So we have $p^* \cdot w^* \geq p^*\cdot (\sum_k \mu^k w^k + \sum_l \lambda^l z^l)$, and $p^* \cdot \lambda^l z^l > p^* \cdot \lambda^l w^*$ for all $l$, so that there must be $k'$ for which $p^* \cdot w^* > p^* \cdot w^{k'}$.  Conclude $w^* \succ^R w^{k'}$ and hence $w^k \succ^I w^{k'}$.  This then implies that there is a $\succ^I$ cycle of length at least two, contradicting the fact that GARP is satisfied.

Finally we show that strict besting' implies strict besting.  Suppose then that  $\bar{x}$ strictly bests' $\bar{y}$. We may write
$\bar{x} = \sum_k \mu^k w^k + \sum_l \lambda^l z^l$, with $z^l\succeq^I \bar y$ for
all $l$, and $w^k\succ^I \bar x$ for all $k$. By the previous proof, we also have
$w^k\succeq^I \bar y$. In fact, since $\bar x$ bests' $\bar y$ we can write $\bar x$ as a convex combination $\bar x = \sum_h \eta^h  r^h$ with each $r^h\succeq^I \bar y$.

Now consider $w^1$. First, if $w^1\gg \bar x$ then $\mu^1<1$ and we may proceed as above to eliminate $w^1$ from the representation of $\bar x$. Second, if $w^1\succ^I \bar x$ but it's not the case that $w^1\gg \bar x$ then by definition of $\succ^I$ there exists $w^*$ with
$w^k\succ^I w^*$ and $(p^*,w^*)$ is part of the data, with
\[
p^*\cdot w^*\geq p^*\cdot \bar x = p^*\cdot ( \sum_h \eta^h
r^h).\] The latter implies that
$p^*\cdot w^*\geq p^*\cdot r^h$ for some $r^h$, and hence that $w^1\succ^I r^h\succeq^I \bar y$. Thus $\bar x$ strictly bests $\bar y$.
\end{proof}

We may now proceed with the proof of Theorem~\ref{thm:varianwelfare}. The starting
point is the system of linear inequalities introduced by
\cite{varian1982nonparametric} for this problem. Indeed, these are essentially
Varian's Fact 4 (\citet{varian1982nonparametric}). In Varian's terminology, $\bar y$ is revealed worse than $\bar x$ if and only if there is no solution  $q>0$ to the system of linear inequalities comprised by the following collection of inequalities: \begin{enumerate}
    \item $q\cdot \bar x \leq q\cdot x^k$ for all $k$ with $x^k\succeq^I \bar x$
    \item $q\cdot \bar x \leq q\cdot x^k$ for all $k$ with $x^k\succeq^I \bar y$
    \item $q\cdot \bar x < q\cdot x^k$ for all $k$ with $x^k\succ^I \bar x$
    \item $q\cdot \bar x < q\cdot x^k$ for all $k$ with $x^k\succ^I \bar y$
    \item $q \cdot \bar x \leq q \cdot \bar y$
\end{enumerate}

Note that each of the first four listed inequalities really describes multiple linear inequalities. For example, there is one inequality $q\cdot \bar x \leq q\cdot x^k$ for each observation  $(p^k,x^k)$ that satisfies $x^k\succeq^I \bar x$. 

The first and third inequalities require that no revealed-preference cycle arises if
we add the hypothesized price $q$ to support $\bar x$, meaning that we add the
observation $(q,\bar x)$ to the data. The remaining inequalities require that with
this hypothesized price, $\bar x$ is not revealed strictly preferred, either directly
or indirectly, to $\bar y$.  If these inequalities are satisfied, then there is a
price $q$ that supports $\bar x$ for which $\bar x$ is not revealed strictly
preferred to $\bar y$.  No matter which price we choose to support $\bar y$, it will
then never be the case that $\bar x$ is revealed strictly preferred to $\bar y$.  It
is known that Afriat's Theorem then allows the flexibility to choose a
rationalization where $u(y) \geq u(x)$ (see Fact 16 in \cite{varian1982nonparametric}).

Let us set up a matrix to capture this system, with one row for each of the
inequalities that are collected in 1-5 above. These rows are of the form $x^k- \bar
x\in\Re^{\ngoods}$ or $\bar y - \bar x\in\Re^{\ngoods}$. We want $q>0$ so there is
also one row for each  $q_h\geq 0$ inequality, and one row for the inequality that
$\sum_h q_h>0$. Consider a dual solution with weights $\ta^k\geq 0$ for each of the
inequalities involving $\bar x$,  $\eta^k\geq 0$ for the inequalities that involve
$\bar y$, and $\eta^{\bar y}$ for the 5th inequality.

We use a prime to distinguish revealed preference from strict revealed preference.  Let $\xi^h\geq 0$ be the dual variable for the $q_h\geq 0$ inequalities and $\xi^M\geq 0$ for the last $\sum_h q_h>0$ inequality. The dual then says, for each $h$, \[\begin{split}
\sum_{\{k:x^k \succeq^I \bar x \}} \ta^k (x^k_h-\bar x_h) + 
\sum_{\{k:x^k \succ^I \bar x \}} \ta'^k (x^k_h-\bar x_h) + 
\sum_{\{k:x^k \succeq^I \bar y \}} \eta^k (x^k_h-\bar x_h) \\
+ \sum_{\{k:x^k \succ^I \bar y \}} \eta'^k (x^k_h-\bar x_h) + \underbrace{\eta^{\bar y}(\bar y_h - \bar x_h)}_{\bigstar} + \xi^M = 0
\end{split}
\]

In an abuse of notation, we shall not distinguish between variables with and without
prime. The term indicated by $\bigstar$, with dual variable $\eta^{\bar y}$,
corresponds to equation 5. For ease of exposition, label $x^{K+1} = \bar y$ and
$\eta^{K+1}=\eta^{\bar y}$, so that inequality 5 becomes an inequality of type 2, and
we write $\eta^{\bar y}(\bar y_h - \bar x_h)=\eta^{K+1}(x^{K+1}_h - \bar x_h)$. 

Suppose first that $\xi^M>0$. Then we get that $\sum_k (\ta^k + \eta^k) x^k \ll \bar x \sum_k (\ta^k + \eta^k)$, which means that $\sum_k (\ta^k + \eta^k)>0 $ and that we may normalize so that $\sum_k \ta^k + \eta^k=1 $. Set $z^{k^*}\gg x^{k^*}$ for some $\ta^{k^*}+\eta^{k^*}>0$, and $z^k=x^k$ for all other $k\neq k^*$, so that $\bar x = \sum_k (\ta^k + \eta^k) z^k$ with $z^k\succeq^I \bar x$ or $z^k\succeq^I \bar y$ for each $k$, and where the comparison becomes $\succ^I$ for $k=k^*$. Notice that we can choose $k^*$ so that $\eta^{k^*}>0$ because if all the $\eta$ variables were zero we would have a certificate for the inequalities in 1 and 3 being infeasible; we know, however, that these are feasible.\footnote{Indeed, if we consider only the inequalities and 1 and 3, and if the dataset is rationalizable, then we may choose $q>0$ to support a rationalizing utility at $\bar x$. The resulting dataset, adding the observation $(q,\bar x)$, must be rationalizable.} We conclude then that $\bar x$ strictly best' $\bar y$.

If instead $\xi^M=0$ then we must have $\ta^k+\eta^k>0$ for some $k$ with either $x^k\succ^I \bar x$ or $x^k\succ^I \bar y$. Again this allows us to assume that $\sum_k \ta^k + \eta^k=1 $ and we get that $\sum_k (\ta^k + \eta^k) x^k \leq \bar x$. Again we obtain that $\bar x$ strictly best' $\bar y$. By Lemma~\ref{lem:besting} the theorem follows.

\subsection{Proof of Theorem~\ref{thm:welfare}}
We begin with the following lemma, which is stated in \cite{chambers2016revealed}, Remark 3.6.

\begin{lemma}Let $i\in N$.  Suppose that for all $k\in[K_i]$, there are $u_i^k \in\Re$ and $\lambda_i^k > 0$ for which for all $k,l\in[K_i]$ satisfying $p_i^k \cdot (x_i^l - x_i^k)\leq 0$, we have \[u_i^l \leq u_i^k + \lambda_i^kp_i^k\cdot(x_i^l-x_i^k).\]  Then the individual dataset $\{(p_i^k,x_i^k)\}_{k\in [K_i]}$ is rationalizable. \end{lemma}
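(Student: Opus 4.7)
The plan is to reduce to the version of Afriat's theorem adapted to the paper's (monotonicity-absorbing) revealed preference, which the authors note remains valid: the dataset $\{(p_i^k,x_i^k)\}_{k=1}^{K_i}$ is rationalizable if and only if it satisfies GARP. It therefore suffices to show that the existence of $u_i^k\in\Re$ and $\lambda_i^k>0$ satisfying the hypothesized restricted Afriat inequalities forces GARP on $D_i$. Note in particular that a direct attempt to build the rationalizing utility via the standard Afriat construction $u(x)=\min_k\{u_i^k+\lambda_i^k p_i^k\cdot(x-x_i^k)\}$ can fail to reproduce $u(x_i^l)=u_i^l$ when $p_i^k\cdot(x_i^l-x_i^k)>0$ for some $k$, which is precisely why the detour through GARP is needed.

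I would argue by contraposition. Suppose $D_i$ violates GARP and let $x_i^{k_0}\succeq^R_i x_i^{k_1}\succeq^R_i\cdots\succeq^R_i x_i^{k_T}=x_i^{k_0}$ be a cycle of observed bundles with at least one strict link. Unpacking the paper's definition, each weak link is witnessed by an observation $m_t$ with $x_i^{k_t}\geq x_i^{m_t}$ and $p_i^{m_t}\cdot(x_i^{k_{t+1}}-x_i^{m_t})\leq 0$. The bundle-order condition $x_i^{k_t}\geq x_i^{m_t}$ together with $p_i^{k_t}\geq 0$ also yields $p_i^{k_t}\cdot(x_i^{m_t}-x_i^{k_t})\leq 0$, so I can apply the hypothesized restricted Afriat inequality twice---once with $(l,k)=(k_{t+1},m_t)$ and once with $(l,k)=(m_t,k_t)$---to conclude $u_i^{k_{t+1}}\leq u_i^{m_t}\leq u_i^{k_t}$. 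Chaining around the cycle forces every one of these inequalities to be an equality, since $k_T=k_0$.

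It then remains to upgrade the strict link to a strict utility inequality, which would collapse the cycle to $u_i^{k_0}<u_i^{k_0}$ and produce the desired contradiction. When the strict link arises from an observation $m_t$ with $p_i^{m_t}\cdot(x_i^{k_{t+1}}-x_i^{m_t})<0$, the first Afriat application becomes strict because $\lambda_i^{m_t}>0$. When instead strictness comes via the ``$x_i^{k_t}\gg x'$'' branch of $\succ^R_i$, I would unpack $x'\succeq^R_i x_i^{k_{t+1}}$ to extract an observation $n$ with $x'\geq x_i^n$ (taking $n=k_{t+1}$ in the degenerate case $x'=x_i^{k_{t+1}}$), and exploit $x_i^{k_t}\gg x_i^n$ together with the positivity of $p_i^{k_t}$ to force $p_i^{k_t}\cdot(x_i^n-x_i^{k_t})<0$, yielding a strict restricted Afriat bound on $u_i^n$ relative to $u_i^{k_t}$, after which the usual step from $u_i^n$ to $u_i^{k_{t+1}}$ through observation $n$ completes the contradiction.

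The main obstacle, I expect, is the bookkeeping around the paper's revealed preference relation: because it folds monotonicity into its definition, each link in the cycle must be split into two Afriat steps (one for the witness observation $m_t$, one absorbing the bundle-order move $x_i^{k_t}\geq x_i^{m_t}$), rather than being handled by a single standard application as in classical Afriat. The $\gg$ clause of the strict revealed preference is the delicate spot, since it is the place where positivity of the prices $p_i^{k_t}$ is substantively used to guarantee that strictness is preserved under the two-step decomposition; everything else in the argument is an essentially mechanical translation of each revealed-preference link into a pair of hypothesized inequalities.
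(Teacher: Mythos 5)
Your proof is correct and follows essentially the same route as the paper's: both derive the absence of revealed-preference cycles (GARP) from the restricted Afriat inequalities by chaining them around a hypothetical cycle and extracting a strict inequality at the strict link, then invoke Afriat's theorem to conclude rationalizability. The only difference is cosmetic bookkeeping --- you carry out the chaining for the monotonicity-augmented relation $\succeq^R_i$ (hence the two-step decomposition per link and the use of $p_i^{k}\geq 0$), whereas the paper works with the bare relation $p_i^k\cdot(x_i^l-x_i^k)\leq 0$ on observed bundles and defers the augmented case to its footnoted remark that Afriat's theorem survives that redefinition.
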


\begin{proof}Suppose that the condition in the statement of the Lemma is satisfied.  Define the pair of binary relations $x_i^k \succeq_i^R x_i^l$ if $p_i^k \cdot (x_i^l - x_i^k) \leq 0$ and $x_i^k \succ_i^R x_i^l$ if $p_i^k \cdot (x_i^l - x_i^k) < 0$.  

A \emph{cycle} is a finite list $x_i^{l_1} \succeq_i^R x_i^{l_2} \succeq_i^R \ldots x_i^{l_a} \succ_i^R x_i^{l_1}$.  We claim that there can be no cycle.  For, if there were, then we would have: \[u_i^{l_{j+1}}-u_i^{l_j}\leq \lambda_i^{l_j}p_i^{l_j}\cdot (x_i^{l_{j+1}}-x_i^{l_j}),\] for all  $j=1,\ldots,a-1$ and 
\[u_i^{l_{1}}-u_i^{l_a}\leq \lambda_i^{l_a}p_i^{l_a}\cdot (x_i^{l_{1}}-x_i^{l_a}).\]

Reading addition of indices as modulo $a$, observe that \[0=\sum_{j=1}^a (u_i^{l_{j+1}}-u_i^{l_j}) \leq \sum_{j=1}^a \lambda_i^{l_j}p_i^{l_j}\cdot (x_i^{l_{j+1}}-x_i^{l_j}) < 0.\]

The first equality is by telescoping, the weak inequality by summing the original inequalities, and the strict inequality because of the right hand sides of the original inequalities are nonpositive (and at least one strictly negative).  So, we arrive at a contradiction and there can be no cycle.  Conclude by Afriat's Theorem \citep{afriat1967construction,chambers2016revealed} that the individual dataset is rationalizable. \end{proof}

Now we proceed with the proof of the theorem.

First, that~\eqref{it:wf1} implies~\eqref{it:wf3} follows because if $u_i$ are rationalizing monotone and explicitly quasiconcave utilities, then $z_i\succeq^I_i \bar x_i$ implies $u_i(z_i)\geq u_i(\bar x_i)$, and $z_i\succ^I_i \bar x_i$ implies $u_i(z_i)> u_i(\bar x_i)$. So when $y_i$ is a convex combination of bundles $z^l_i\succeq^I_i \bar x_i$ we must have that $u_i(y_i)\geq u_i(\bar x_i)$ by quasiconcavity of utility. Moreover, if $z^l_i\succ^I_i \bar x_i$ for some $l$ then we obtain $u_i(y_i)> u_i(\bar x_i)$ by explicit quasiconcavity. In all, then, when $y_i$ bests $\bar x_i$ for all agents, and strictly bests for at least one agent, we have that $\bar x$ is Pareto dominated for the rationalizing utilities.

Second, it is obvious that~\eqref{it:wf2} implies~\eqref{it:wf1}. So we focus our attention on showing that~\eqref{it:wf3} implies~\eqref{it:wf2}. (Indeed our argument shows that~\eqref{it:wf2} and~\eqref{it:wf3} are equivalent.)  Suppose then that~\eqref{it:wf3} is satisfied. We will demonstrate that there exists some $q\in\Re_{++}^m$ so that, for all $i\in N$, the individual dataset given by $\{(p_i^k,x_i^k)\}_{k\in [K_i]}\cup \{(\overline{x}_i,q)\}$ is rationalizable.  This then implies (by Afriat's Theorem) the existence of a concave, increasing utility function for which for all $y\in\Re_+^m$ satisfying $q\cdot y \leq q \cdot \overline{x}_i$, we have $u_i(y)\leq u_i(\overline{x}_i)$, and consequently that $u_i(y)> u_i(\overline{x}_i)$ implies $q\cdot y > q\cdot \overline{x}_i$.  Consequently, it also follows that $u_i(y)\geq u_i(\overline{x}_i)$ implies $q\cdot y \geq q\cdot \overline{x}_i$, by continuity and monotonicity of $u_i$.  It then follows that $\overline{x}$ is efficient for these utility indices.\footnote{If not, then there is $\overline{y}$ for which $\sum_i \overline{y}_i = \sum_i \overline{x}_i$ and for all $i\in N$, we have $u_i(\overline{y}_i  )\geq u_i(\overline{x}_i)$, with inequality strict for some $j\in N$, implying $\sum_i q\cdot \overline{y}_i> \sum_i q\cdot \overline{x}_i$, a contradiction.}

The proof relies on a homogeneous Theorem of the Alternative: see \cite{BorderTofA}.  

The content of Afriat's Theorem is that  for each $i\in N$ and $k\in[K_i]$, there is $u_i^k$ and $\lambda_i^k>0$ for which for all $k,l\in[K_i]$, \[u_i^k\leq u_i^l + \lambda_i^l p_i^l \cdot (x_i^k-x_i^l).\]

What we would now like to find are additional unknown  parameters.  Namely, for each $i\in N$, a scalar $\overline{u}_i\in \Re$ and $q\in\Re^m$.   The vector $q$ is required to be common to all individuals and will reflect the common prices supporting the hypothesized efficient allocation $\overline{x}$. 

Our task is then to find $q\in\Re^m$, and for each $i\in N$, a real number $\overline{u}_i\in\Re$, and for each  $i\in N$ and $k\in [K_i]$, $u_i^k\in\Re$ and $\lambda_i^k\in\Re$ for which the following linear inequalities are satisfied:

\begin{enumerate}
\item \label{it:afriat1}
For all $i\in N$ and all $k,l\in[K_i]$ for which $p^k_i \cdot (x^l_i -x^k_i)\leq 0$, we have $u^l_i\leq u^k_i + \lambda^k_i p^k_i \cdot (x^l_i-x^k_i)$.
\item  \label{it:afriat2} For all $i\in N$ and all $k\in[K_i]$, $u_i^k \leq \overline{u}_i + q\cdot(x_i^k - \overline{x}_i).$
\item  \label{it:afriat3} For all $i\in N$ and all $k\in[K_i]$, for which $p_i^k\cdot (\overline{x}_i - x_i^k)\leq 0$, we have $\overline{u}_i \leq u_i^k + \lambda_i^k p_i^k\cdot (\overline{x}_i - x_i^k)$.
\item  \label{it:afriat4} For all $i\in N$ and all $k\in[K_i]$,  $\lambda_i^k > 0$.
\item  \label{it:afriat5} $q \geq 0$ and $q \neq 0$.
\end{enumerate}

The inequalities can be represented in matrix notation.  We display part of the matrix below, as the matrix itself is quite large.  The matrix below displays four horizontal blocks.  The first two correspond to vectors corresponding to weak inequalities, the latter two to strict. This matrix has, for each agent $i$, $2(K_i + 1)$ columns, and an additional $m$ columns; in total the number of columns is $m+\sum_i (2K_i + 1)$.  Observe that, in the matrix written below, the column labelled by $q$ actually represents $m$ columns; for example, $\mathbf{1}_{m'}$ is an indicator function of the dimension $m'\in\{1,\ldots,m\}$.

As to rows, the matrix has, for each agent $i$, one row for each ordered pair $(l,k)$ where  
$l,k\in[K_i]$, $k\neq l$, and $p_i^k\cdot (x_i^l-x_i^k)\leq 0$. When agent $i$ is understood, the row is labeled $(l,k)$, as in the displayed matrix below. Continuing with the rows for agent $i$, there are also three rows for each $k$: one labeled by $(k,*)$, one by $(*,k)$ and one by $k$. The row labeled $(k,l)$ for agent $i$ is meant to capture inequality~\eqref{it:afriat1}: there is a $1$ in the column $k$ for agent $i$, a $-1$ in column $l$, and $ p^k_i \cdot (x^l_i -x^k_i)$ in the column for $k$ among the second set of $K_i$ columns. The rest of the entries in that row are zero. In a similar vein, the rows labeled by $(k,*)$ and $(*,k)$ are there to encode the inequalities in~\eqref{it:afriat2} and in~\eqref{it:afriat3}. The row labeled $k$ is meant to capture the basic positivity constraint~\eqref{it:afriat4}, and has a one in column $k$, among the second collection of $K_i$ columns. 

Finally, the matrix has a collection of rows $m+1$ that are not specific to any agent and seek to capture~\eqref{it:afriat5}. There is then one column for each $m'\in\{1,\ldots,m\}$ (labelled $(*,m)$), expressing the nonnegativity of $q$, and a row asserting that $\sum_{m'=1}^m q(m')>0$; the row labelled $M$.  

Because this matrix is large, we only show certain portions of it.  The rows listed in the matrix have zeroes everywhere for every remaining column.

\[ \kbordermatrix{
     & 1 & \cdots& k  & \cdots& l & \cdots & K_i &  \cdots & * &\vrule     & 1' & \cdots & k'  &  \cdots & K_i' & \vrule & q\\
%(1,2)& 1 & \cdots& 0  & \cdots& 0 & \cdots & 0   &  \cdots & 0 &\vrule     & p^1\cdot (x^2-x^1) & \cdots & 0  &  \cdots & 0 & \vrule & 0\\
%\vdots & \vdots & & \vdots  & & \vdots &  & \vdots & & \vdots &\vrule     & \vdots & & \vdots  &   & \vdots & \vrule & 0\\
(l,k)& 0 & \cdots& 1 & \cdots& -1 & \cdots & 0   & \cdots & 0 & \vrule & 0 & \cdots & p^k_i\cdot(x^l_i-x^k_i) & \cdots & 0 &  \vrule& 0\\
\vdots & \vdots & & \vdots  & & \vdots &  & \vdots & & \vdots &\vrule     & \vdots &  & \vdots  &   & \vdots & \vrule & 0 \\ 
(*,k) & 0 & \cdots & 1 & \cdots & 0 & \cdots & 0 & \cdots & -1 & \vrule & 0 & \cdots & p^k_i \cdot (\overline{x}_i - x^k_i) & \cdots & 0 & \vrule & 0\\
\vdots & \vdots & & \vdots  & & \vdots &  & \vdots & & \vdots &\vrule     & \vdots &  & \vdots  &   & \vdots & \vrule & 0 \\ 
(k,*) & 0 & \cdots & -1 & \cdots & 0 & \cdots & 0 & \cdots & 1 & \vrule & 0 & \cdots & 0 & \cdots & 0 & \vrule & x_i^k - \overline{x}_i\\
\hline
\vdots & \vdots & & \vdots  & & \vdots &  & \vdots & & \vdots &\vrule     & \vdots &     & \vdots  &   & \vdots & \vrule & 0 \\
(*,m') & 0 & \cdots & 0 & \cdots & 0 & \cdots & 0 & \cdots & 0 & \vrule & 0 & \cdots & 0 & \cdots & 0 & \vrule & \mathbf{1}_{m'}\\
\vdots & \vdots & & \vdots  & & \vdots &  & \vdots & & \vdots &\vrule     & \vdots &     & \vdots  &   & \vdots & \vrule & 0 \\
\hline
M & 0 & \cdots & 0 & \cdots & 0 & \cdots & 0 & \cdots & 0 & \vrule & 0 & \cdots & 0 & \cdots & 0 & \vrule & \mathbf{1}_{\{1,\ldots,m\}}\\
\hline
\vdots & \vdots & & \vdots  & & \vdots &  & \vdots & & \vdots &\vrule     & \vdots &     & \vdots  &   & \vdots & \vrule & 0 \\
k & 0 & \cdots& 0  & \cdots& 0 & \cdots & 0 & \vdots & 0 &\vrule     & 0 & \cdots   & 1  &  \cdots & 0 & \vrule&0\\
\vdots & \vdots & & \vdots  & & \vdots &  & \vdots&&\vdots  &\vrule     & \vdots &     & \vdots  &   & \vdots & \vrule&0 \\
}\]

We are searching for a vector in $m+\sum_i (2K_i+1)$ dimensional real space which, when multiplied with this matrix to yield a linear combination of its columns, results in a vector whose coordinates in the first two horizontal blocks are nonnegative, and in the last two are strictly positive. Such a vector would represent a solution to the system of inequalities \eqref{it:afriat1}-\eqref{it:afriat5}. This is the system to which we will apply a duality result.

By Motzkin's transposition theorem (a version of the theorem of the alternative, see Theorem 47 in \cite{BorderTofA}) there is no solution to the set of inequalities (and consequently to the enumerated list of inequalities above) if and only if there is, for each row of the matrix, a nonnegative weight, where for some row corresponding to a strict inequality (either in the third or fourth horizontal block), one of the weights is strict, for which the weighted sum of rows is the zero vector.

So, let us suppose by means of contradiction that there is no solution to the linear system.  Therefore, there exists a solution to the dual system.  %Except for the rows labelled $(*,m')$ and labelled $M$, each row has a corresponding weight.  
Interpret the solution as a collection of weights on the rows of the matrix. 
For the rows corresponding to agent $i\in N$ (any row except the one labelled $M$), we let $\xi_i^A \geq 0$ denote the weight for the row labelled by $A$.  For example, in the row of the above matrix labelled $(l,k)$, $\xi_i^{(l,k)}$ is the associated weight.  We let $\xi^M \geq 0$ be the weight associated with row $M$ (which is common to all $i\in N$), and we let $\xi^{(*,m')}\geq 0$ be the weight associated with row $(*,m')$.

%[FE: we're using lambda for multipliers. need to choose another letter]

%Then, one of the following must be true:  either $\xi^M>0$, or there is some $i\in N$ and $k\in[K_i]$ for which the row labelled $k$ also has a strictly positive weight; say $\xi_k^i>0$.

The matrix has a special structure.  Observe that, restricted to the first $\sum_i(K_i +1)$ block of columns on the left, and the rows labeled $(k,l)$, $(k,*)$, or $(k,*)$ for some agent (and some $k,l$), the matrix becomes the incidence matrix of a graph with vertexes that can be identified with these  $\sum_i(K_i +1)$ columns. So each vertex is identified with a pair $(i,k)$, of an agent and an observation $k\in [K_i]$, or with a pair $(i,*)$ for the hypothesized efficient bundle.  An edge goes from a node $(i,k)$ to $(i,l)$ when $p_i^k\cdot(x_i^l-x_i^k)\leq 0$.  An edge goes from $(i,*)$ to $(i,k)$ when $p_i^k\cdot(\overline{x}_i-x_i^k)\leq 0$.  An edge always goes from $(i,k)$ to $(i,*)$.  

Now, the solution to the dual, when restricted to the incidence submatrix, provides a non-negative linear combination of rows that equals the null vector. The Poincar\'e-Veblen-Alexander theorem \citep{berge2001theory} claims that for any non-negative weighted sum of incidence vectors of a directed graph which is zero, there is a collection of positively oriented cycles in the graph, each cycle being associated with a weight, and the total weight ascribed to an incidence vector is the sum of all weights associated to cycles in which the incidence vector appears.  Here, a cycle includes no repetitions of nodes.

Because the individual dataset $\{(p_i^k,x_i^k)\}_{k\in [K_i]}$ is rationalizable, we may assume without loss of generality that every such cycle involves an edge of the type connecting $(i,k)$ to $(i,*)$.  This is because otherwise, along all elements of the cycle, rationalizability implies that $p_i^{k_j}\cdot (x_i^{k_{j+1}}-x_i^{k_j})=0$, and thus the weighted sum of vectors across that cycle is zero.  Removing them does not affect the total weighted sum of rows.

Let us now represent the cycles associated with agent $i\in N$ by $\mathcal{C}_i$, as described, each of them comes  with a weight $\mu(c) \geq 0$.  What we just claimed is that for each $c\in \mathcal{C}_i$, there is some $k\in[K_i]$ and an edge connecting $(i,k)$ to $(i,*)$.  This implies, in particular, that $x_i^k \succeq_i^I \bar x_i$.  To see why, let the cycle be written via a sequence of nodes: $(i,*),(i,k_1),\ldots,(i,k_l=k),(i,*)$.  Because $(i,*)$ is connected to $(i,k_1)$ by an edge, it means that $p_i^{k_1}\cdot (\bar x_i - x_i^{k_1})\leq 0$, so that $x_i^{k_1} \succeq_i^R \bar x_i$; similarly, $x_i^{k_{j+1}}\succeq_i^R x_i^{k_{j}}$ for all $j=1,\ldots,l-1$.  Consequently, by definition, $x_i^k \succeq_i^* \bar x_i$.

What we have just claimed is that if $\xi_i^{(k,*)} > 0$, it must be that $x_i^k \succeq_i^I \bar x_i$. 

%{\color{magenta}CHRIS: THE FOLLOWING PARAGRAPH DOESN'T MAKE MUCH SENSE AND THE STUFF AFTER IT IS NOT SO CAREFUL.  NEED TO BE VERY CAREFUL ABOUT CYCLES SINCE THEY ARE WHERE WE GET THE INDIRECT REVEALED PREFERENCE FROM.}

%Let us now represent the cycles associated with agent $i\in N$ by $\mathcal{C}_i$.  The total weight assigned to the cycle $c\in \mathcal{C}_i$ is denoted by $\mu(c)$; therefore, for each row of the matrix going from a node $(i,x)$ to $(i,y)$, we have $\xi^i_{(x,y)}=\sum_{c\in\mathcal{C}_i}\mu(c)$, where each $c\in \mathcal{C}_i$ is such that an edge both flows into $(i,*)$ and out of $(i,*)$.  

%Observe that for a cycle $c\in \mathcal{C}_i$ containing an edge flowing into $(i,*)$, there is $k\in[K_i]$ for which $x_i^k \succeq_i^I \overline{x}_i$; namely, that  $k$ for which $x_i^k-\overline{x}_i$ is contributes to the last column.

%%%
%Since either the row corresponding to $M$, or a row corresponding to $\xi_i^k$ must have a strictly positive weight, we can investigate either of these possibilities.  It will turn out that the existence of a strictly positive weight for either of these will obtain one of the two conditions listed in the statement of the theorem.

Now, again by Motzkin's transposition theorem, one of the following must be true:  either $\xi^M > 0$, or there is  $i\in N$ and $k\in[K_i]$ for which $\xi_i^k > 0$.

Let us consider each of the two cases in turn.  

\textbf{Case 1: There is a dual solution with $\xi^M>0$.}

The only columns for which row $M$ are nonzero are the last $m$ columns.  Rows of type $(*,m')$ add (potentially) non-negative terms to these last $m$ columns.  Since the weighted sum of rows equals zero, it follows that 
\begin{equation}\label{eq:equation}
\sum_i\sum_{k\in [K_i]} \xi^{(*,k)}_i(x^k_i - \overline{x}_i) = - \sum_{m'=1}^m \xi^{*,m'} \one_{m'} - \xi^M\one_{1\,\ldots,m}\ll 0.
\end{equation}
In other words,  for each $i\in N$ and each $k\in[K_i]$, there is a number $\theta_i^k\geq 0$ for which \[\sum_i\sum_{k\in [K_i]}\theta_i^k (x^k_i-\overline{x}_i)\ll 0,\] where by the preceding discussion, $\theta_i^k > 0$ implies $x_i^k \succeq_i^I \overline{x}_i$.  Furthermore, there is $i\in N$ and $k\in[K_i]$ for which $\theta_i^k > 0$, since equation~\eqref{eq:equation} is strictly negative in every coordinate.

Without loss of generality (since the system is homogeneous), we may assume that $\sup_{i\in N}\sum_{k\in [K_i]}\theta_i^k = 1$.  

For each $i\in N$, let $\ta^0_i = 1 - \sum_{k\in [K_i]} \ta^k_i$. Then \[ 
\sum_i (\ta^0_i \bar x_i + \sum_k \ta^k_i x^k_i ) 
= \sum_i (\bar x_i + \sum_k \ta^k_i (x^k_i - \bar x_i) ) 
\ll  \sum_i \bar x_i.
\]
So we can define
\[ 
\bar y_i = \ta^0_i \bar x_i  +  \sum_{k\in [K_i]}\ta^k_i x^k_i.
\] for all $i\neq 1$. Observe that $\bar y_i$ is a convex combination of $\bar x_i\succeq_i^I \bar x_i$ (by definition), and $x^k_i\succeq^I_i \bar x_i$. If $\ta^0_1>0$, choose  $y'_1\gg \bar x_1$ so that 
$\bar y_1 = \ta^0_1 y'_1  +  \sum_{k=1}^{K_1}\ta^k_1 x^k_1$ and $y'_1\succ^I_1 \bar x_1$; otherwise choose $y^{k^*}_1\gg x^{k^*}_1$ so that  $\bar y_1 = \ta^0_1 \bar x_1  +  \sum_{k=1}^{K_1}\ta^k_1 x^k_1 + \ta^{k^*}_1(y^{k^*}_1 - x^{k^*}_1)$ and $y^{k^*}_1\succ^I_1 x^{k^*}_1$. Either way the allocation $\bar y_i$ bests $\bar x_i$ for all agents, and strictly bests it for agent~1.

\textbf{Case 2: There is a dual solution with $\xi_i^k>0$.}

 This means that there is  $i\in N$ and $k\in[K_i]$ for which $\xi_i^k > 0$. Fix such an $i^*\in N$ and a $k^*\in[K_i]$.  Because $\xi_M = 0$ is possible, we may only conclude in this case that $\sum_i \sum_{k\in [K_i]}\xi_i^{(*,k)} (x_i^k - \overline{x}_i) \leq 0$.  
 
On the other hand, we may conclude, since $\xi_{i^*}^{k^*}>0$, that there is also $l\in \{1,\ldots,K_{i^*}\}$ with $\xi_{i^*}^{(l,{k^*})}>0$ and  $p_{i^*}^{k^*} \cdot (x_{i^*}^l -x_{i^*}^{k^*}) < 0$; or in other words, $x_{i^*}^{k^*}\succ_i^R x_{i^*}^l$.  In particular, the edge $(i^*,k^*)$ to $(i^*,l)$  belongs to some $c\in\mathcal{C}_i$, which has a corresponding $\xi_{i^*}^{(*,k)}>0$; we may conclude then that $x_{i^*}^k \succ_{i^*}^I \bar x_{i^*}$.
 
Now $\sum_i \sum_{k\in [K_i]}\xi_i^{(*,k)} (x_i^k - \overline{x}_i) \leq 0$ implies that we can again as in Case 1 set $\ta^k_i=\xi_i^{(*,k)}$, assume without loss that $\sum_k \ta^k_i\leq 1$, and define $\ta^0_i=1-\sum_k \ta^k_i$. Then we may set $z^0_i=\bar x_i$ when $\ta^0_i>0$ and $z^k_i=x^k_i$ when $\ta^k_i>0$ and then we have (ignoring terms where $\ta^k_i=0$) 
\[ \sum_i \sum_{k=0}^{K_i} \ta^k_i z^k_i \leq \sum_i \bar x_i
\] so that if we define an allocation by  $y_i=\sum_{k=0}^{K_i} \ta^k_i z^k_i$, and recall that  $x_{i^*}^k \succ_{i^*}^I \bar x_{i^*}$, we conclude that the allocation $(y_i)$ empirically dominates $(\bar x_i)$. 

\subsection{Proof of Theorem~\ref{thm:individualwelfare}}

For this proof we start by constructing the same matrix as in the proof of Theorem~\ref{thm:welfare} but with $N=1$, and where we now add a  row $\one_{*}-\one_{k}$ for each $k$ to capture the inequality $u^k \leq \bar u$. The idea is to consider the same collection of linear inequalities as before, but where we in addition require that the level of utility in the new observation exceeds that of any existing observation in the data. Consider a solution to the dual. Again when restricted to the incidence matrix there is a collection of oriented cycles in the graph, each cycle being associated with a weight, and the total weight ascribed to an incidence vector is the sum of all weights associated to cycles in which the incidence vector appears.  A cycle includes no repetitions of nodes.

Because the individual dataset $\{(p_i^k,x_i^k)\}_{k\in [K_i]}$ is rationalizable, we may assume without loss of generality that every such cycle involves an edge of the type connecting $(i,k)$ to $(i,*)$.  This is because otherwise, along all elements of the cycle, rationalizability implies that $p_i^{k_j}\cdot (x_i^{k_{j+1}}-x_i^{k_j})=0$, and thus the weighted sum of vectors across that cycle is zero.  Removing them does not affect the total weighted sum of rows.

By the same argument as in Theorem~\ref{thm:welfare}, if $\mathcal{C}$ denotes the set of cycles, each of them with weight $\mu(c)$, we know that a cycle has an edge connecting (say) $(k)$ to $(*)$, where $\xi^{(k,*)}>0$ and that in consequence $x^k\succeq^I \bar x$. What is different from the proof of Theorem~\ref{thm:welfare} is that now the cycle may involve an edge going from (say) $(l)$ to $(*)$ which was added from a row $\one_{*}-\one_{l}$ due to the inequality $u^l\leq \bar u$.

Now as before there are two cases to contend with. First, when $\xi^M>0$ we obtain as before that $\sum_k \xi^{(k,*)}(x^k-\bar x)\ll 0$. This means that there is a convex combination $\ta^-\bar x + \sum_k \ta^k x^k \ll \bar x$ with support in $\bar x$ and the $x^k\succeq^I \bar x$ (as $\ta^k=\xi^{(k,*)}>0$ means that the argument in previous paragraph applies). Second, when $\xi^M=0$ then we must have $\xi^k>0$ for some $k$. This may again lead to the same case as in Theorem~\ref{thm:welfare}, or it may be the case that $\xi^{(k,*)}=0$ for all $k$ and we have a strict cycle involving the new $\bar x\succeq^R x^l$ edges. This would be a violation of GARP.

\subsection{Proof of Theorem~\ref{thm:QL}}\label{sec:proofthmQL}
We offer only a sketch, as the details are similar to our other results.

For all $i\in N$ and all $k\in [K_i] \cup [L]$ and $t\in [K_i]$, consider the Afriat inequalities:
\[
U_i^k \leq U_i^t + p_i^t\cdot (x_i^k-x_i^t).
\]  In these inequalities, $U_i^k$ is unknown.

For all $k\in [K_i] \cup [L]$ and all $s\in L$, consider the Afriat inequalities for the  $L$ proposed
allocations,
\[
U_i^k \leq U_i^s + q^s\cdot (x_i^k-x_i^s).
\] 
In these inequalities, $U^k_i$ is unknown for $k\in [K_i]\cup [L]$ and $q^s\in\mathbb{R}^n_+$ is unknown for $s\in [L]$. 

Consider three matrices, $A,B,C$. These matrices have one row for each triple $(i,k,t)$ with $i\in N$, $k,t\in [K_i]\cup [L]$ and $k\neq t$.
%one row for each $(i,k,s)$ with $i\in N$, $k\in K_i$ and $s\in [L]$;
%one row for each $(i,k,s)$ with $i\in N$, $s\in [L]$ and $k\in K_i$ ;
%one row for each $(i,k,s)$ with $i\in N$, $k,s\in [L]$.

Matrix $A$ has one column for each element of $(\cup_{i\in N} [K_i])\cup ([n]\times [L])$: identify each column with the unknown $U_i^s$. 
In the row for $(i,k,s)$ $A$ is equal to zero everywhere except for a $-1$ in the column for $U^k_i$ and $1$ in the column for $U^s_i$. 

Matrix $B$ has $n\times L$ columns: identify each with the unknown $q^s_\ell$. In the
row for $(i,k,s)$ in which $s\in [L]$ matrix $B$ has zero in all entries except for a $x^k_{i,\ell}-x^s_{i,\ell}$ in the column for $q^s_\ell$.

Matrix $C$ has a single column. In the row for $(i,k,s)$ with $s\in [K_i]$ this column equals $p^s_i\cdot (x^k_i-x^s_i)$. It equals zero in any row $(i,k,s)$ with $s\in [L]$.

  For each row $r=(i,k,s)$ and matrix $a\in\{A,B,C\}$ we write $r_a$ for row $r$ in
  matrix $a$. The system is infeasible iff there exists weights $\ta(r)\geq 0$ for each row $r=(i,k,s)$ such that
  \begin{enumerate}
  \item $\sum_{r}\ta(r)r_A = 0$
  \item $\sum_{r}\ta(r)r_B \leq 0$
  \item $\sum_{r}\ta(r)r_C <0$
  \end{enumerate}

  Note that, for each $i$ and $k$, $\sum_{r}\ta(r)r_A = 0$ implies that
  \[\sum_{s\in [K_i]\cup [L],s\neq k} \ta(i,k,s) - \sum_{s\in [K_i]\cup [L],s\neq k} \ta(i,s,k) =0.\]

Let $\eta_i(k,s)=\ta(i,k,s)$ and define $\eta_i(k,k)$ for each $k$ so that the matrix $\eta$ has constant row-column sum.
%Then $\eta_i:([K_i]\cup[L])\times ([K_i]\cup[L])\to\Re_+$ is cyclic because 
%$\sum_{s\in [K_i]\cup [L],s\neq k} \eta(k,s) = \sum_{s\in [K_i]\cup [L],s\neq k} \eta_i(s,k)$.
  
Since $\sum_{r}\ta(r)r_B \leq 0$ and $\eta_i$ is independent of $\ell$ we obtain
that, for each $s\in L$,  $\sum_{i\in N}\sum_{k\in [K_i]\cup [L]} \eta_i(k,s)
(x^k_{i}-x^s_{i})\leq 0$. Finally, $\sum_{r}\ta(r)r_C <0$ implies that 
\[\sum_i \sum_{t\in K_i}\sum_{k\in [K_i]\cup [L]}\eta_i(k,t)p_i^t\cdot(x_i^t-x_i^k)>0.\]

\subsection{Proof of Proposition~\ref{prop:CRU}}

The result follows from two simple lemmas.

\begin{lemma}\label{lem:CRU1}
\[  S^{u^*_i}(x^K_i)
  = \bigcap\limits_{(V_i,\la_i)\in A_i}S^{u^{(V_i,\la_i)}}(x^K_i)\]
\end{lemma}
\begin{proof}
Suppose that $u^{(V,\la)}(x_i)\geq  u^{(V,\la)}(x^K_i)$ for all $(V,\la)\in
A_i$. Then since $u^{(V,\la)}(x^K_i)=V^K_i=1$ for all  $(V,\la)\in
A_i$, it follows that $u^*_i(x_i) = \inf\{u^{(V,\la)}(x_i):(V,\la)\in A_i \}\geq u^*_i(x^K_i)$.

Conversely, suppose that $u^*_i(x)\geq u^*_i(x^K)$. Then again, since
$u^*_i(x^K)=1=u^{(V_i,\la_i)}(x^K)$ for any $(V_i,\la_i)\in A_i$,
we conclude that  $u^{(V_i,\la_i)}(x_i)\geq  u^{(V_i,\la_i)}(x^K_i)$ for all $(V_i,\la_i)\in
A_i$.
\end{proof}

\begin{lemma}\label{lem:CRU2}
Let $u_i$ be concave and monotone rationalization of the data with $u(x^K_i)=1$. Then
there is $(V_i,\la_i)\in A_i$ such that
$S^{u_i}(x^K_i)\subseteq S^{u^{(V_i,\la_i)}}(x^K_i)$.
\end{lemma}
\begin{proof}
For each $x_i\in\Re^{\ngoods}_+$, let $V_i^{x_i} = u_i(x_i)$ and $q_i^{x_i}\in\partial u_i(x_i)$. Then we
have, for any $x_i$ and $y_i$ that $V_i^{y_i}\leq V_i^{x_i}+q_i^{x_i}\cdot (y_i-x_i)$. We also have that
\[
u_i(x_i)=\inf\{V_i^{y_i}+q_i^{y_i}\cdot (x_i-y_i):y_i\in\Re^\ngoods_+ \}.
\]

Let $(p_i^k,x_i^k)$, $k=1,\ldots,K$ be a dataset.

If $u_i$ rationalizes the data, then we can identify $V_i^{k}=V_i^{x^k}$ and choose
$\la_i^k$ so that 
$q_i^{x^k}= \la_i^k p_i^k$. Because $u$ is a rationalization, then, $(V_i,\la_i)\in A_i$.  The resulting Afriat utility satisfies that, for any
$x_i\in S_{u_i}(x_i^K)$,
\begin{align*}
u^{(V_i,\la_i)}(x^K_i)=V^K_i & = u_i(x^K_i) \leq u_i(x_i)  \\
& = \inf\{V^y_i + q^y_i\cdot (x_i-y_i) : y_i\in\Re^\ngoods_+\}  \\
& = \inf\{V^k_i + q^k_i\cdot (x_i-x^k_i) :k=1,\ldots,K\}  \\
\end{align*}
 Hence $S_{u_i}(x^K_i)\subseteq S_{u^{(V_i,\la_i)}}(x^K_i)$.
\end{proof}

\clearpage
\bibliographystyle{econometrica}
\bibliography{efficiency}
\end{document}